\newcommand{\Lap}{\text{Lap}}
\newcommand{\Var}{\mathrm{Var}}
\newcommand{\E}[1]{\mathbb{E}\left[#1\right]}
\newcommand{\abs}[1]{\left\lvert{#1}\right\rvert}
\newcommand{\eps}{\epsilon}
\newcommand{\pufQ}{\mathcal{Q}}
\newcommand{\inc}{\mathsf{i}}
\newcommand{\gen}{\mathsf{g}}
\newcommand{\sat}{\mathsf{s}}
\newcommand{\hgt}{\mathsf{h}}
\newcommand{\wgt}{\mathsf{w}}
\newcommand{\parent}{\mathsf{parent}}
\newcommand{\dbprivacy}{dataset attribute privacy}
\definecolor{DarkGreen}{rgb}{0.1,0.5,0.1}
\newcommand{\olya}[1]{\textcolor{blue}{[Olya: #1]}}
\newcommand{\fixme}[1]{\textcolor{orange}{#1}}
\newtheorem{theorem}{Theorem}
\newtheorem{definition}{Definition}
\newtheorem{example}{Example}
\title{Attribute Privacy: Framework and Mechanisms}
\author{Wanrong Zhang\footnotemark[1] \and Olga Ohrimenko\footnotemark[2] \and Rachel Cummings\footnotemark[3]}
\begin{document}

\maketitle

\renewcommand{\thefootnote}{\fnsymbol{footnote}}
\footnotetext[1]{School of Industrial and Systems Engineering, Georgia Institute of Technology. Email: \texttt{wanrongz@gatech.edu}. Supported in part by a Mozilla Research Grant, NSF grant CNS-1850187, and an ARC-TRIAD Fellowship from the Georgia Institute of Technology. Part of this work was completed while the author was at Microsoft Research.}
\footnotetext[2]{School of Computing and Information Systems, The University of Melbourne. Email: \texttt{oohrimenko@unimelb.edu.au}. Part of this work was completed while the author was at Microsoft Research.}
\footnotetext[3]{Department of Industrial Engineering and Operations Research, Columbia University. Email: \texttt{rac2239@columbia.edu}. Supported in part by a Mozilla Research Grant, a Google Research Fellowship, NSF grants CNS-1850187 and CNS-1942772 (CAREER), and a JPMorgan Chase Faculty Research Award. Most of this work was completed while the author was at Georgia Institute of Technology.}

\renewcommand{\thefootnote}{\arabic{footnote}}

\begin{abstract}
%

Ensuring the privacy of training data is a growing concern since many machine learning models are trained on confidential and potentially sensitive data. Much attention has been devoted to methods for protecting individual privacy during analyses of large datasets. However in many settings, global properties of the dataset may also be sensitive (e.g., mortality rate in a hospital rather than presence of a particular patient in the dataset). In this work, we depart from individual privacy to initiate the study of attribute privacy, where a data owner is concerned about revealing sensitive properties of a whole dataset during analysis. We propose definitions to capture \emph{attribute privacy} in two relevant cases where global attributes may need to be protected: (1) properties of a specific dataset and (2) parameters of the underlying distribution from which dataset is sampled. We also provide two efficient mechanisms and one inefficient mechanism that satisfy attribute privacy for these settings. We base our results on a novel use of the Pufferfish framework to account for correlations across attributes in the data, thus addressing ``the challenging problem of developing Pufferfish instantiations and algorithms for general aggregate secrets'' that was left open by \cite{kifer2014pufferfish}.
\end{abstract}


\section{Introduction}

\if 0
\olya{need to massage these with our mechanisms}
Consider the following scenario.
A school needs to release average heights and weights
of the students to track the level of obesity.
Given there is a correlation between these
variable and gender, by releasing the statistics
concerning these variables the high school is also
releasing information about the gender split in the school
which it may want to protect.

For example, a university may want to announce the average SAT score of its incoming class, but because family income is known to be correlated with standardized test scores, this may additionally reveal (protected) financial information about the student population.

A hospital is required to release
a report regarding the statistics on the number of days
its patents stay in the hospital.
However, it is known that there is a correlation between
the number of days one stays in the hospital and
their income. Hence, by releasing this statistics the
hospital is also revealing the wealth level of its patients.

Release of vocabulary frequencies:
There are known correlations between
pairs, triples of words. If one releases frequency of
only some subset, there is a danger that
one could learn frequency of correlated words.
\fi

%

Privacy in the computer science literature has generally been defined at the \emph{individual level}, such as differential privacy \cite{DMNS06}, which protects the value of an individual's data within analysis of a larger dataset.
However, there are many settings where confidential information contained in the data goes beyond presence
or absence of an individual in the data and instead relates to attributes at the \emph{dataset level}.
Global properties about attributes revealed from data analysis may leak trade secrets,
intellectual property and other valuable information pertaining to the data owner,
even if differential privacy is applied~\cite{10.1145/2020408.2020598}.

In this paper, we are interested in privacy of \emph{attributes in a dataset}, where an analyst must prevent \emph{global properties} of sensitive attributes in her dataset from leaking during analysis. For example, insurance quotes generated by a machine-learned model might leak information about how many female and male drivers are insured by the company that trained the model; voice and facial recognition models may leak the distribution of race and gender among users in the training dataset \cite{10.1504/IJSN.2015.071829, BG18}.
 Under certain circumstances, even releasing the distribution from which the data were sampled may be sensitive. For example, experimental findings by a pharmaceutical company measuring the efficacy of a new drug would be considered proprietary information. 
It is important to note that the problem we consider here departs from individual-level attribute privacy where
one wishes to protect attribute value of a record (e.g., person's race)
as opposed to a function over all values of this attribute in the dataset (e.g., race distribution in a dataset).


{Several recent attacks show that global properties about a dataset can indeed be leaked
from machine learning model APIs~\cite{Song2020Overlearning,DBLP:journals/corr/abs-1805-04049,zhang2020datasetlevel,10.1145/3243734.3243834}.
In fact, these works show that models learn sensitive attributes even when censorship is applied or when
the attributes are deemed irrelevant for the actual learning task.
Hence, the naive solution of removing sensitive attributes from the dataset
is insufficient, as attributes are often correlated, and protected information can still be leaked
by releasing non-sensitive information about the data.
Though differential privacy can be used to protect sensitive attributes at the individual level
(e.g., in the algorithmic fairness literature~\cite{DHP+12}),
the study of attribute privacy at the dataset or distribution level is limited, both in terms of a framework for reasoning about it and mechanisms for protecting it.}

\subsection{Our Contributions}

\if 0
\fixme{The contributions of this work are three-fold.
We first identify the gap in the literature for preserving the privacy of global properties
about a dataset. We then propose definitions to allows an analyst to capture information
that they wish to protect at a dataset level.
Since protecting global properties about a
dataset can be expensive in general,
we identify settings that are of practical interest
and propose efficient mechanisms for satisfying our definitions.
We highlight these contributions in more detail below.}
\fi

\paragraph{Problem formulation} We
initiate the study
of \emph{attribute privacy} at the dataset and distribution level and establish the first formal framework
for reasoning about these privacy notions.
We identify two cases
where information about global properties of a dataset may need to be protected:
 (1) properties of a specific dataset
and (2) parameters of the underlying distribution from which dataset is sampled.
We refer to the first setting as \emph{dataset attribute privacy}, where the data owner wishes to protect properties of her sample from a distribution, but is not concerned about revealing the distribution. For example, even though the overall prevalence of a disease may be known, a hospital may wish to protect the fraction of its patients with that disease. 
We refer to the second setting as \emph{distributional attribute privacy}, which considers the distribution parameter itself a secret. For example, demographic information of the population targeted by a company may reveal information about its proprietary marketing strategy.
These two definitions distinguish between protecting a sample and protecting the distribution from which the dataset is sampled.


\paragraph{Definitions of Attribute Privacy.} We propose definitions for capturing dataset and distributional attribute privacy by instantiating a general privacy framework called the Pufferfish framework~\cite{kifer2014pufferfish}. This framework was originally introduced to handle correlations across individual entries in a database.
Instantiating this framework for attribute privacy is non-trivial as it requires
reasoning about secrets and parameters at a dataset level.

For dataset attribute privacy, our definition considers the setting where individual records are independent of each other
while correlations may exist between attribute values of each record.
Then, to be able to capture general global properties of a dataset that need to be protected, we choose to express secrets
as functions over attribute values across all records in a dataset.
For example, this allows one to express that the average income of individuals in a dataset
being below or above \$50K is secret information.

Our second definition also instantiates the Pufferfish framework
while explicitly capturing the random variables used to generate
attribute values of a record. Here, the parameters of the distribution
of protected attributes are treated as confidential information.
For example, in a dataset where records capture trials
in a stochastic chemical environment, 
one can express that determining whether the probability with which a certain compound
is added in each trial is 0.2 or 0.8 is a secret.

\paragraph{Mechanisms to Protect Attribute Privacy.}
Our definitions allow
an analyst to specify secrets about global properties of a dataset
that they wish to protect. In order to satisfy these definitions
the analyst can use a general tool for providing Pufferfish privacy
called the Wasserstein mechanism proposed by Song~\textit{et al.}~\cite{song2017pufferfish}.
However, this mechanism is computationally expensive and may require computing an exponential number of pairwise Wasserstein distances, which is not feasible in most practical settings.  To this end, we propose efficient mechanisms in the following two settings.

For dataset attribute privacy, we consider a special class
of functions and attribute properties and
propose a mechanism based on Gaussian noise.
Though the nature of the noise is added from the same family of distributions as
the differentially private Gaussian Mechanism, in Section~\ref{sec:datasecrets} we articulate that
the similarity between the two is based solely on the nature of the noise.
In Section~\ref{sec:gaus_precise}, we show that the mechanism can be applied to datasets where (1) attributes follow a multivariate Gaussian distribution
and (2) the function to be computed on the data and the attribute property to be protected are linear in the number of records in the dataset (e.g., mean).
We note that with the help of
variational auto-encoders (VAEs)~\cite{DBLP:journals/corr/KingmaW13}, one
can obtain a Gaussian representation of the data
even if a dataset does not come from a Gaussian distribution.
Moreover, such disentangled representations
can be based on interpretable attributes~\cite{Higgins2017betaVAELB}
that are easier for specifying which attributes require protection, particularly when the original data are complex (e.g., pixels on an image vs. the gender of the person in it).\footnote{Though naive use of VAEs may not provide end-to-end privacy guarantees, it serves as an example that it is possible to obtain a representation of non-Gaussian data
with interpretable Gaussian features. We leave it as an interesting open question on how to provide end-to-end privacy-preserving feature disentanglement.}
 Nevertheless, we also consider the case
where data may not follow Gaussian distribution.
Specifically, in Section \ref{s.nogauss},
we relax the Gaussian assumption 
and show that our mechanism can still provide dataset attribute privacy by leveraging
Gaussian approximations.


For distributional attribute privacy,
we consider a model where dependencies between the attributes
form a Bayesian network.
This model helps us capture the extent to which
a sensitive attribute parameter affects
parameters of attributes in the query,
and we add noise proportional to this influence.
Although our mechanism is inspired by the Markov Quilt mechanism ~\cite{song2017pufferfish},
the difference in settings prompts several changes,
including a different metric for measuring influence between the variables.

Finally, we note that
although~\cite{kifer2014pufferfish} identified
that
``there is little focus in the literature on rigorous
and formal privacy guarantees for business data'',
they leave ``the challenging
problem of developing Pufferfish instantiations and algorithms for general aggregate
secrets" as future work.



\subsection{Related work} 


Machine learning models have been shown
to memorize and leak data used to train them,
raising questions about
the release and use of these models
in practice.
For example, membership attacks~\cite{DBLP:conf/sp/ShokriSSS17}
show that models can leak whether certain
records (e.g., patient data) were part of the training dataset or not.
Attribute (or feature) privacy attacks, on the other hand,
consider leakage of attribute values at
an individual level~\cite{Song2020Overlearning,
DBLP:conf/ccs/FredriksonJR15}, and property inference attacks show that global properties
about datasets can be leaked~\cite{DBLP:journals/corr/abs-1805-04049,10.1145/3243734.3243834,10.1504/IJSN.2015.071829}.

Differential privacy (DP) \cite{DMNS06,privacybook} guarantees individual-level privacy when publishing an output computed on a database, by bounding the influence of any record on the output and adding noise. Importantly, DP does not aim to protect population-level information, and was designed to learn global properties of a dataset without sacrificing individual privacy.
DP does provide \emph{group privacy} guarantees for groups of $k$ correlated records, but these quantitative guarantees are only meaningful when $k$ is small relative to the size of the dataset. Syntactically, DP guarantees that if any individual record were to be changed---including all attributes of that record---the result of the analysis would be approximately the same. For attribute privacy, we seek similar guarantees if an entire attribute of the dataset were to be changed---including all individuals' values for that attribute.

The Pufferfish framework~\cite{kifer2014pufferfish} that we instantiate
and describe in detail in the following sections, can be seen as
a generalization of differential privacy that 
explicitly states the information that needs to be kept secret and the
adversary's background knowledge about the data.
Blowfish privacy~\cite{sigmod:HeMD14}
also allows one to express
secrets and publicly known information about the data,
but expressed 
as
constraints on the data rather than distributions over data.
We adapt the Markov Quilt Mechanism from~\cite{song2017pufferfish}, who also employ the Pufferfish framework~\cite{kifer2014pufferfish} for private analysis of correlated data, although they focus on individual-level privacy.  Our focus instead on privacy of dataset properties and distributions leads to a substantially different instantiation of the Pufferfish framework where the secrets are defined over attribute values rather than individual records in the dataset.

Research on algorithmic fairness has proposed several definitions formalizing the idea that machine learning models should not
exhibit discrimination based on protected attributes (e.g., gender or race). 
Demographic parity formalizes fairness
by requiring that a classifier's predicted label
is independent of an individual's protected attributes. Our notion of dataset attribute privacy is a general framework where one can specify
what information about attributes need to be protected,
with attribute independence being one such scenario.
However, our attribute privacy definitions would not be useful for satisfying other fairness notions that explicitly incorporate protected attributes, such as affirmative action or fairness through awareness \cite{DHP+12}.
Moreover, techniques proposed
to obtain fair representations of the training
data~\cite{DBLP:journals/corr/LouizosSLWZ15,pmlr-v28-zemel13}
have been shown to still leak
sensitive attributes~\cite{Song2020Overlearning} when applied in the privacy context.


\section{Preliminaries}\label{s.prelim}

\textbf{Pufferfish Privacy.}
The Pufferfish privacy framework \cite{kifer2014pufferfish} consists of three components: a set of secrets $S$, a set of discriminative pairs $\pufQ \subseteq S \times S$, and a class of data distributions $\Theta$. $S$ is a set of possible facts about the database that we might wish to hide. $\pufQ$ is the set of secret pairs $(s_i,s_j)$, $s_i, s_j \in S$, that we wish to be indistinguishable, where $s_i$ and $s_j$ must be mutually exclusive. The class of data distributions $\Theta$ can be viewed as a set of conservative assumptions about the underlying distribution that generates the database.

\begin{definition}[$(\epsilon,\delta)$-Pufferfish Privacy \cite{kifer2014pufferfish,song2017pufferfish}\protect\footnotemark]\label{def:pufferfish} \footnotetext{The original definition~\cite{kifer2014pufferfish}
and the one considered in~\cite{song2017pufferfish}
is $(\epsilon,0)$-Pufferfish. We extend the definition to $(\epsilon,\delta)$-Pufferfish in the natural way.} A mechanism $\mathcal{M}$ is $(\eps,\delta)$-Pufferfish private in a framework $(S,\pufQ,\Theta)$ if for all $\theta\in \Theta$ with $X \sim \theta$, for all secret pairs $(s_i,s_j)\in \pufQ$ such that $P(s_i|\theta)\neq 0$ and $P(s_j|\theta)\neq 0$, and for all $T \subseteq Range(\mathcal{M})$, we have
\[
P_{\mathcal{M},\theta}(\mathcal{M}(X)\in T|s_i,\theta)\le \exp(\eps) {P_{\mathcal{M},\theta}(\mathcal{M}(X)\in T|s_j,\theta)} + \delta.
\]
\end{definition}
%

The Wasserstein Mechanism proposed
in~\cite{song2017pufferfish} and defined formally in Section \ref{s.wass} is the first
general mechanism for satisfying instantiations of Pufferfish privacy framework.  
It defines sensitivity of a function $F$ as the maximum Wasserstein distance between the distribution of $F(X)$ given two different realizations of secrets $s_i$ and $s_j$ for $(s_i,s_j) \in \pufQ$.  The mechanism then instantiates the Laplace mechanism by outputting $F(X)$ plus Laplace noise that scales with this sensitivity. 
Although this mechanism works in general for any instantiation of the Pufferfish framework, computing Wasserstein distance for all pairs of secrets is computationally expensive, and will typically not be feasible in practice.  

\cite{song2017pufferfish} also gave the Markov Quilt Mechanism (Algorithm \ref{algo:mq} in Appendix \ref{app.prelim}) for some special structures of data dependence.
It is more efficient than the Wasserstein Mechanism and also guarantees $(\eps,0)$-Pufferfish privacy. 
The Markov Quilt Mechanism  of \cite{song2017pufferfish} assumes that the entries in the input database $Y$ form a Bayesian network, as defined below.  These entires could either be: (1) the multiple attributes of a single record when the database contained only one record, or (2) the attribute values across multiple records for a single-fixed attribute when the database contained multiple attributes.  Hence, the original Markov Quilt Mechanism could not accommodate correlations across multiple attributes in multiple records, as we study in this work. Full details of this algorithm are given in Appendix \ref{app.prelim}.

\begin{definition}[Bayesian Networks]\label{def.bayesnet}
	A Bayesian network is described by a set of variables $Y=\{Y_1,\ldots, Y_n\}$ and a directed acyclic graph $G=(Y,E)$ whose vertices are variables in $Y$. The probabilistic dependence on  $Y$ included by the network can be written as: $\Pr(Y_1,\ldots,Y_n)=\Pi_{i=1}^n \Pr(Y_i|\parent(Y_i))$.
\end{definition}

\section{Attribute Privacy Definitions}\label{s.attdefs}

\paragraph{Data model and representation.}
The dataset $X$ contains $n$ records, where each record consists of $m$ attributes. We view the dataset $X$ as an $n\times m$ matrix.
In this work, we are interested in privacy of the \emph{columns}, which
represent attributes that a data owner wishes to protect.
Thus we refer to the matrix $X$ as $X=[X_1,\ldots,X_m]$, where $X_i$ is the column vector related to the $i$th attribute (column). 
In contrast, traditional differential privacy~\cite{DMNS06,privacybook}
is concerned with privacy of the \emph{rows} of the dataset matrix.
We let $X^j_i$ denote $i$th attribute value for $j$th record.

Each record is assumed to be sampled i.i.d.~from an unknown distribution, where attributes within a single record can be correlated (e.g., consider height and weight).  We use $C \subseteq [m]$ to denote a set of indices of the sensitive attributes that require privacy protection (e.g., race and gender may be sensitive attributes; hair color may be non-sensitive). The data owner wishes to compute a function $F$ over her dataset
and release the value (or estimate of the value) $F(X)$ while protecting some information about the sensitive attributes.



\paragraph{Privacy notions.}
We distinguish between three kinds of attribute privacy, corresponding to three different types of information the data owner may wish to protect.

\emph{Individual attribute privacy} protects $X^j_i$ for sensitive attribute~$i$ when $F(X)$ is released. Note that differential privacy provides individual attribute privacy simultaneously for all individuals and all attributes \cite{DMNS06}, but does not protect against individual-level inferences from population-level statistics \cite{10.1145/2020408.2020598}. For example, if a DP result
shows a correlation between lung disease and smoking, one may infer that a known-smoker in the dataset has an elevated likelihood of lung disease.


\emph{Dataset attribute privacy} is applicable when the owner wishes to reveal $F(X)$
while protecting the value of some function $g(X_i)$ for sensitive attribute $i\in C$ (e.g., whether there were more Caucasians or Asians present in the dataset).


\emph{Distribution attribute privacy} protects privacy of a parameter $\phi_i$ that
governs the distribution of $i$th sensitive attribute in
the underlying population from which the data are sampled.


The last two notions are the ones put forward in this paper
and studied in detail.
The difference between them may be subtle
depending on $g$ and $\phi$.
For example, consider one setting where the sensitive attribute
is binary and $g$ is the fraction of records
where this attribute is 1, and another setting
where the sensitive attribute is a Bernoulli random
variable with parameter $\phi$.
In this case, $g$ can be seen as an estimate of $\phi$
based on a sample.
The difference becomes particularly relevant in settings where privacy is required for realizations of the dataset that are unlikely under the data distribution, or settings with small datasets where $g$ is a poor estimate of $\phi$.

\paragraph{Formal framework for attribute privacy.}
The standard notion of differential privacy is not
directly applicable to our setting since
we are interested in protecting
population-level information.
Instead, we formalize our attribute privacy definitions using the Pufferfish privacy framework of Definition \ref{def:pufferfish} by specifying the three components $(S,\pufQ,\Theta)$. The distributional assumptions of this framework are additionally useful for formalizing correlation across attributes. 



\begin{definition}[Dataset Attribute Privacy]
Let $(X_1^j, X_2^j,\ldots,X_m^j)$ be a record with $m$ attributes that is sampled from an unknown distribution $\mathcal{D}$, and let $X=[X_1,\ldots,X_m]$ be a dataset of $n$ records sampled i.i.d.~from $\mathcal{D}$ where $X_i$ denotes the (column) vector containing values of $i$th attribute of every record. Let $C\subseteq [m]$ be the set of indices of sensitive attributes, and for each $i \in C$, let $g_i(X_i)$ be a function with codomain $\mathcal{U}^i$.

A mechanism $\mathcal{M}$ satisfies \emph{$(\epsilon,\delta)$-dataset attribute privacy} if it is $(\epsilon,\delta)$-Pufferfish private for the following framework $(S,\pufQ,\Theta)$:
\begin{description}
\item Set of secrets: $S=\{s^i_a := \mathbbm{1}[g_i(X_i) \in \mathcal{U}^i_a] : \mathcal{U}^i_a \subseteq \mathcal{U}^i, i\in C\}$.
\item Set of secret pairs: $\pufQ=\{(s_a^i,s_b^i) \in S \times S, i\in C\}$.
\item Distribution: $\Theta$ is a set of possible distributions $\theta$ over the dataset $X$. For each possible distribution $\mathcal{D}$ over records, there exists a $\theta_{\mathcal{D}} \in \Theta$ that corresponds to the distribution over $n$ i.i.d.~samples from $\mathcal{D}$.
\end{description}
\label{def:dbprivacy}
\end{definition}

This definition defines each secret $s_a^i$ as the event that $g_i(X_i)$ takes a value in a particular set $\mathcal{U}_a^i$, and the set of secrets $S$ is the collection of all such secrets for all sensitive attributes.  This collection may include all possible subsets of $\mathcal{U}^i$, or it may include only application-relevant events.  For example, if all $\mathcal{U}_a^i$ are singletons, this corresponds to protecting any realization of $g_i(X_i)$. Alternatively, the data owner may only wish to protect whether $g_i(X_i)$ is positive or negative, which requires only $\mathcal{U}_a^i = (-\infty,0)$ and $\mathcal{U}_b^i = [0,\infty)$. The set of secret pairs $\pufQ$ that must be protected includes all pairs of the events on the same sensitive attribute. The Pufferfish framework considers distributions $\theta$ over the entire dataset $X$, whereas we require distributions $\mathcal{D}$ over records.  We resolve this by defining $\Theta$ to be the collection of distributions over datasets induced by the allowable i.i.d.~distributions over records.

Determining which functions $g_i$ to consider is an interesting question. For example, in~\cite{10.1145/1142351.1142375}
the authors show that it is tractable to check whether the output of certain classes of functions evaluated on a dataset reveals information about the output of another query evaluated 
on the same dataset. Hence, given a function $F$ whose output a data owner wishes to release,
the owner may consider either those $g_i$'s about which $F$ reveals information, or
those for which verifying perfect privacy w.r.t. $F$ is infeasible.

\begin{definition}[Distributional Attribute Privacy]
Let $(X_1^j, X_2^j,\ldots,X_m^j)$ be a record with $m$ attributes that is sampled from an unknown distribution described by a vector of random variables $(\phi_1,\ldots,\phi_m)$, where $\phi_i$
parameterizes the marginal distribution of $X_i^j$ conditioned on the values of all $\phi_k$ for $k\neq i$. The $(\phi_1,\ldots,\phi_m)$ are drawn from a known joint distribution $P$, and each $\phi_i$ has support $\Phi^i$.
Let $X=[X_1,\ldots,X_m]$ be a dataset of $n$ records sampled i.i.d.~from the distribution described by $(\phi_1,\ldots,\phi_m)$
where $X_i$ denotes the (column) vector containing values of $i$th attribute of every record. Let $C\subseteq [m]$ be the set of indices of sensitive attributes. 



A mechanism $\mathcal{M}$ satisfies \emph{$(\epsilon,\delta)$-distributional attribute privacy} if it is $(\epsilon,\delta)$-Pufferfish private for the following framework $(S,\pufQ,\Theta)$:
\begin{description}
\item Set of secrets: $S=\{s^i_a := \mathbbm{1}[\phi_i \in \Phi^i_a] : \Phi^i_a \subset \Phi^i, i\in C\}$.
\item Set of secret pairs: $\pufQ=\{(s_a^i,s_b^i) \in S \times S, i\in C\}$.
\item Distribution: $\Theta$ is a set of possible distributions $\theta$ over the dataset $X$. For each possible $\phi = (\phi_1,\ldots,\phi_m)$
describing the conditional marginal distributions for all attributes, there exists a $\theta_{\phi} \in \Theta$ that corresponds to the distribution over $n$ i.i.d.~samples from the distribution over records described by $\phi$.
\end{description}
\label{def:distprivacy}
\end{definition}


This definition naturally parallels Definition \ref{def:dbprivacy}, with the attribute-specific random variable $\phi_i$ taking the place of the attribute-specific function $g_i(X_i)$.  Although it might seem natural for $\phi_i$ to define the \emph{marginal} distribution of the $i$th attribute, this would not capture the correlation across attributes that we wish to study.  Instead, $\phi_i$ defines the \emph{conditional marginal} distribution of the $i$th attribute given all other $\phi_{\neq i}$, which does capture such correlation. This also allows the distribution $\theta$ over datasets to be fully specified given these parameters and the size of the dataset.

More specifically, we model attribute distributions using standard notion of Bayesian hierarchical modeling. The $(\phi_1,\ldots,\phi_m)$ can be viewed as a set of hyperparameters of the distributions of the attributes, and $P$ as hyper-priors of the hyperparameters.
The distribution $P$ is captured in~$\Theta$, and the distribution of attribute $X_i$ is governed by a realization of the random variable $\phi_i$. The $\phi_i$ describes the conditional marginal distribution for attribute~$i$: it is the hyperparameter of the probability of $X_i$
given hyperparameters of all other attributes $P(X_i|\phi_1,\ldots, \phi_{i-1}, \phi_{i+1},\ldots, \phi_m)$.
We make the ``naive'' conditional independence assumption that all attributes $X_i$ are mutually independent conditional on the set of parameters $(\phi_1,\ldots,\phi_m)$, hence, $(\phi_1,\ldots,\phi_m)$ fully capture the distribution of a record. The ``naive'' conditional independence is a common assumption in probabilistic models, and naive Bayes is a simple example that employs this assumption.

Since both of our attribute privacy definitions are instantiations of the Pufferfish privacy framework, one could easily apply the Wasserstein Mechanism \cite{song2017pufferfish} to satisfy $(\epsilon,0)$-attribute privacy for either of our definitions.  {The Wasserstein distance metric has also been used to calibrate noise in prior work on distributional variants of differential (individual-level) privacy~\cite{KM19a,KM19b}.} However, as described in Section \ref{s.prelim}, implementing this mechanism requires computing Wasserstein distance between the conditional distribution on $F(X)$ for all pairs of secrets in $\pufQ$. Computing exact Wasserstein distance is known to be computationally expensive, and our settings may require exponentially many computations in the worst case. In the remainder of the paper, we provide efficient algorithms that satisfy each of these privacy definitions, focusing on dataset attribute privacy in Section \ref{sec:datasecrets} and distributional attribute privacy in Section \ref{sec:distsecrets}, before returning to the (inefficient) Wasserstein Mechanism in Section \ref{s.wass}.

\section{The Gaussian Mechanism for Dataset Attribute Privacy}\label{sec:datasecrets}

In this section we consider \dbprivacy{} as introduced in Definition \ref{def:dbprivacy}. In this setting, an analyst wants to publish a function $F$ evaluated on her dataset $X$, but is concerned about an adversary observing $F(X)$ and performing a Bayesian update to make inferences about a protected quantity $g_i(X_i)$. We propose a variant of the Gaussian Mechanism \cite{privacybook} that satisfies \dbprivacy{} when $F(X)$ conditioned on $g_i(X_i)$ follows a Gaussian distribution, with constant variance conditioned on $g_i(X_i)=a$ for all $a$. Although this setting is more restrictive, it is still of practical interest.
For example, it can be applied when $X$ follows a multivariate Gaussian distribution and
$g_i$ and $F$ are linear with respect to the entries of~$X$, as we show in the instantiation
of our mechanism in Section~\ref{sec:gausexample}.
We also note that using
variational auto-encoders (VAEs)~\cite{DBLP:journals/corr/KingmaW13,Higgins2017betaVAELB}, it is possible
to encode
data from other distributions using a Gaussian representation with interpretable features.
This would then allow an analyst to specify which latent features are deemed sensitive
for the data, even if the original features are less descriptive (e.g., pixels on an image vs.~the gender of
the person in it). In Section \ref{s.nogauss}, we propose
the Attribute-Private Gaussian Mechanism for non-Gaussian data that does not make the above assumptions.
In particular, the mechanism allows the analyst to use Gaussian approximations to characterize the conditional distribution of $F(X)$ given $g_i(X_i)$,
while still providing formal dataset attribute privacy guarantees.

%


\subsection{Attribute-Private Gaussian Mechanism}
\label{sec:gaus_precise}
Algorithm \ref{algo:gaussian} presents the Attribute-Private Gaussian Mechanism for answering a real-valued query $F(X)$ while protecting the values of $g_i(X_i)$ for $i \in C$. Much like the Gaussian Mechanism for differential privacy \cite{privacybook}, the  Attribute-Private Gaussian Mechanism first computes the true value $F(X)$, and then adds a Gaussian noise term with mean zero and standard deviation that scales with the sensitivity of the function.  However, \emph{sensitivity} of $F$ in the attribute privacy setting is defined with respect to each secret attribute $X_i$ as, 
\begin{equation}\label{def:gaussian_sens}
\Delta_i F=\max_{\theta\in \Theta}\max_{(s_a^i,s_b^i)\in \pufQ}\abs{\E{F(X)|s_a^i,\theta}-\E{F(X)|s_b^i,\theta}}.
\end{equation}
This differs from the sensitivity notion used in differential privacy in two key ways. First, we are concerned with measuring changes to the value of $F(X)$ caused by changing secrets $s_a^i$ corresponding to realizations of $g_i(X_i)$, rather than by changing an individual's data.  Second, we assume our data are drawn from an unknown underlying distribution $\theta$, so $F(X)$ is a random variable. Our attribute privacy sensitivity bounds the maximum change in posterior expected value of $F(X)$ in the worst case over all distributions and pairs of secrets for each attribute. We note that if $F(X)$ is independent of the protected attribute $X_i$, then $\Delta_i F = 0$ and no additional noise is needed for privacy.  The Attribute-Private Gaussian Mechanism of Algorithm \ref{algo:gaussian} further benefits from the inherent randomness of the output $F(X)$. In particular, it reduces the variance $\sigma^2$ of the noise added by the conditional variance of $F(X)$ given $g_i(X_i)$ and $\theta$, as the sampling noise can mask some of the correlation. Hence, privacy also comes for free if the function of interest has low correlation with the protected attributes.

Algorithm \ref{algo:gaussian} can be easily extended to handle vector-valued queries with $F(X) \in \mathbb{R}^k$ and sensitive functions $g_i$ over multiple attributes by changing $\Delta_i F$ in Equation \eqref{def:gaussian_sens} to be the maximum $\ell_2$ distance rather than absolute value.  Additionally, the noise adjustment for each attribute should be based on the conditional covariance matrix of $F(X)$ rather than the conditional variance.


		\begin{algorithm}[tbh]
		\centering
			\caption{Attribute-Private Gaussian Mechanism, APGM($X, F, \{g_i\}, C, \{S, \pufQ, \Theta \}, \epsilon, \delta$) for dataset attribute privacy.} 
			\begin{algorithmic}
				\State \textbf{Input:} dataset $X$, query $F$, functions $g_i$ for protected attributes $i\in C$, framework $\{S, \pufQ, \Theta \}$ , privacy parameters $\epsilon$, $\delta$
				\State Set $\sigma^2=0$, $c=\sqrt{2\log (1.25/\delta)}$.
				\State \textbf{for} each $i \in C$ \textbf{do}
				\State \indent Set $\Delta_i F=\max_{\theta\in \Theta}\max_{(s_a^i,s_b^i)\in \pufQ}\abs{\E{F(X)|s_i^a}-\E{F(X)|s_i^b}}$.
				\State \indent \textbf{if} $(c\Delta_i F/\epsilon)^2-\min_{\theta \in \Theta}\Var(F(X)|g_i(X_i),\theta)\ge \sigma^2$ \textbf{then}
				\State \indent \indent Set $\sigma^2=(c\Delta_i F/\epsilon)^2-\min_{\theta \in \Theta}\Var(F(X)|g_i(X_i),\theta)$.
				\State \textbf{if} $\sigma^2 > 0$ \textbf{then}
				\State \indent Sample $Z\sim\mathcal{N}(0, \sigma^2)$.
				\State \indent Return $F(X)+Z$.
				\State \textbf{else} Return $F(X)$.
			\end{algorithmic}\label{algo:gaussian}
		\end{algorithm}

\begin{restatable}{theorem}{gausspriv}\label{thm.gausspriv}
	The Attribute-Private Gaussian Mechanism \\ APGM($X, F, \{g_i\}, C, \{S, \pufQ, \Theta \}, \epsilon, \delta$) in Algorithm \ref{algo:gaussian}  is $(\eps,\delta)$-dataset attribute private when $F(X)|g_i(X_i)$ is Gaussian distributed for any $\theta \in \Theta$ and $i\in C$.
\end{restatable}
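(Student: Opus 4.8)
The plan is to reduce the analysis to the two-Gaussian comparison that underlies the classical Gaussian mechanism for differential privacy. Fix an arbitrary distribution $\theta \in \Theta$, a sensitive attribute $i \in C$, and a secret pair $(s_a^i, s_b^i) \in \pufQ$ with $P(s_a^i \mid \theta), P(s_b^i \mid \theta) \neq 0$. Writing the mechanism's output as $\mathcal{M}(X) = F(X) + Z$ with $Z \sim \mathcal{N}(0, \sigma^2)$ drawn independently of $X$, the goal is to show that the two conditional output distributions $\mathcal{M}(X) \mid s_a^i, \theta$ and $\mathcal{M}(X) \mid s_b^i, \theta$ satisfy the $(\eps,\delta)$-indistinguishability required by Definition \ref{def:pufferfish}; since $\theta$, $i$, and the pair are arbitrary, this establishes dataset attribute privacy.

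First I would characterize the two conditional distributions. By the hypothesis that $F(X) \mid g_i(X_i)$ is Gaussian with a conditional variance that does not depend on the value of $g_i(X_i)$, conditioning on the secret event and on $\theta$ yields $F(X) \mid s_a^i, \theta \sim \mathcal{N}(\mu_a, \tau^2)$ and $F(X) \mid s_b^i, \theta \sim \mathcal{N}(\mu_b, \tau^2)$, where $\mu_a = \E{F(X) \mid s_a^i, \theta}$, $\mu_b = \E{F(X) \mid s_b^i, \theta}$, and $\tau^2 = \Var(F(X) \mid g_i(X_i), \theta)$ is common to both secrets. Because $Z$ is independent Gaussian noise, the outputs are also Gaussian with a common variance: $\mathcal{M}(X) \mid s_a^i, \theta \sim \mathcal{N}(\mu_a, \Sigma^2)$ and $\mathcal{M}(X) \mid s_b^i, \theta \sim \mathcal{N}(\mu_b, \Sigma^2)$ with $\Sigma^2 = \tau^2 + \sigma^2$. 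The definition of $\Delta_i F$ in Equation \eqref{def:gaussian_sens} immediately gives $\abs{\mu_a - \mu_b} \le \Delta_i F$.

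Next I would verify the variance is large enough. The algorithm sets $\sigma^2$ to be at least $(c\Delta_i F/\eps)^2 - \min_{\theta' \in \Theta}\Var(F(X) \mid g_i(X_i), \theta')$ for the fixed $i$ (with $\sigma^2 = 0$ when this quantity is nonpositive, in which case the inherent randomness alone suffices). Since $\tau^2 = \Var(F(X) \mid g_i(X_i), \theta) \ge \min_{\theta'}\Var(F(X) \mid g_i(X_i), \theta')$, in every case $\Sigma^2 = \tau^2 + \sigma^2 \ge (c\Delta_i F/\eps)^2$, i.e.\ $\Sigma \ge c\Delta_i F/\eps$ with $c = \sqrt{2\log(1.25/\delta)}$. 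This is precisely the regime where the standard Gaussian mechanism guarantee applies: for two Gaussians with equal variance $\Sigma^2$, means differing by at most $\Delta_i F$, and $\Sigma \ge c\Delta_i F/\eps$, the privacy-loss analysis of the Gaussian mechanism \cite{privacybook} yields $P(\mathcal{N}(\mu_a, \Sigma^2) \in T) \le e^{\eps} P(\mathcal{N}(\mu_b, \Sigma^2) \in T) + \delta$ for every $T$. Applying this to the conditional output distributions gives the required bound, completing the reduction.

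The main obstacle is ensuring that the two conditional output distributions share a common variance, so that the clean equal-variance Gaussian bound can be invoked rather than the messier comparison of Gaussians with unequal variances. This is exactly what the constant-conditional-variance hypothesis buys, and it is also why the algorithm subtracts the minimal conditional variance $\min_{\theta'}\Var(F(X)\mid g_i(X_i), \theta')$ rather than the $\theta$-specific value: taking the minimum over $\Theta$ guarantees that the total-variance bound $\Sigma^2 \ge (c\Delta_i F/\eps)^2$ holds simultaneously for \emph{every} $\theta$, including the worst case that achieves the minimum. A secondary point to handle carefully is the reading of the Gaussian hypothesis for non-singleton secret sets $\mathcal{U}_a^i$: conditioning on $\{g_i(X_i) \in \mathcal{U}_a^i\}$ produces a mixture of the value-conditioned Gaussians, so I would either restrict attention to secrets that pin down the relevant conditional to a single Gaussian, or argue that the stated hypothesis is to be read at the level of the secret events $s_a^i$ themselves.
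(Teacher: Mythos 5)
Your proposal is correct and follows essentially the same route as the paper's own proof: both reduce the secret-pair comparison to the classical equal-variance Gaussian mechanism analysis of differential privacy, using the constant-conditional-variance hypothesis to get two Gaussians with common variance $\Var(F(X)\mid g_i(X_i),\theta)+\sigma^2$ and means differing by at most $\Delta_i F$, and using the minimum over $\Theta$ in the noise calibration so the bound holds for every $\theta$ simultaneously. Your closing caveat about non-singleton secret sets $\mathcal{U}_a^i$ (where conditioning yields a mixture of Gaussians rather than a single Gaussian) is a genuine subtlety that the paper's proof glosses over, and your suggested reading --- that the Gaussian hypothesis be interpreted at the level of the secret events $s_a^i$ --- is the right way to make the argument airtight.
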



Privacy follows from the observation that the summation of $F(X)$ and the Gaussian noise $Z$ is Gaussian distributed conditioned on any secrets, and the probabilities of the output conditioned on any pairs of secrets have the same variance with mean difference $\Delta_i F$. Since we bound the ratio of the two probabilities caused by shifting this variable, the analysis reduces to the proof of Gaussian mechanism in differential privacy. 

\begin{proof}
	Fix any pair of secrets $(s_a^i,s_b^i) \in \pufQ$ for a fixed secret attribute $X_i$ under any $\theta \in \Theta$. Recall that $s_a^i$ denotes the event that $g_i(X_i)\in\mathcal{U}^i_a$. 
	Let $Z\sim \mathcal{N}(0,\sigma^2)$ denote the Gaussian noise added in Algorithm \ref{algo:gaussian}.  We have $[\mathcal{M}(X)|s_a^i,\theta ]=[(F(X)+Z)|s_a^i,\theta ]=[F(X)|s_a^i,\theta]+Z$, because $Z$ is independent of $s_a^i$ and $\theta$. Since we have assumed that $F(X)|g_i(X_i)$ is Gaussian distributed and the summation of two Gaussians is Gaussian, $\mathcal{M}(X)|s_a^i$ follows a Gaussian distribution with mean $\E{F(X)|s^i_a,\theta}$ and variance $\Var(F(X)|g_i(X_i), \theta)+\sigma^2$. 
	The ratio of probabilities of seeing an output $w$ on a pair of secrets $(s_a^i,s_b^i)$ in the worst case is as follows,
	\begin{align}
	&\max_{(s_a^i,s_b^i)\in \pufQ}\abs{\log\frac{\exp(-\frac{1}{2}(\Var(F(X)|g_i(X_i),\theta)+\sigma^2)(w-\E{F(X)|s_i^a,\theta})^2}{\exp(-\frac{1}{2}(\Var(F(X)|g_i(X_i),\theta)+\sigma^2)(w-\E{F(X)|s_i^b,\theta})^2}} \notag\\
	=&\abs{\log\frac{\exp(-\frac{1}{2}(\Var(F(X)|g_i(X_i),\theta)+\sigma^2)w^2)}{\exp(-\frac{1}{2}(\Var(F(X)|g_i(X_i),\theta)+\sigma^2)(w+\Delta)}},\label{eq.bound_1}
	\end{align}
	where $\Delta=\max_{(s_i^a,s_i^b)\in \pufQ}\E{F(X)|s^i_a,\theta}-\E{F(X)|s^i_b,\theta}$.
	Equation \eqref{eq.bound_1} follows from shifting the variable $w$ to $w+\E{F(X)|s^i_a,\theta}$. We observe that the probability ratio can be viewed as the probability ratio in the Gaussian Mechanism in {\em differential privacy} with noise draw from $\mathcal{N}(0,\Var(F(X)|g_i(X_i),\theta)+\sigma^2)$, and query sensitivity $\max_{(s_a^i,s_b^i)\in \pufQ}\E{F(X)|s_a^i}-\E{F(X)|s_b^i}$.  Then, our analysis reduces to the proof of the Gaussian Mechanism in {\em differential privacy}. The Gaussian Mechanism for {\em differential privacy} with 
		\begin{align*}
		&\Var(F(X)|g_i(X_i),\theta)+\sigma^2\\\ge &2\log (1.25/\delta)\left(\frac{\max_{(s_a^i,s_b^i)\in \pufQ}\E{F(X)|s_a^i,\theta}-\E{F(X)|s_b^i,\theta}}{\eps}\right)^2
		\end{align*}ensures that with probability at least $1-\delta$, we have 
	\begin{equation*}
	P(\mathcal{M}(X)\in T |s_a^i,\theta)\le \exp(\eps)P(\mathcal{M}(X)\in T |s_b^i,\theta)+\delta,
	\end{equation*}
	for any $T \subseteq Range(\mathcal{M})$.
	 Equivalently, we have 
	\begin{align*}
	\sigma^2\ge &2\log (1.25/\delta)\left(\frac{\max_{(s_a^i,s_b^i)\in \pufQ}\E{F(X)|s_a^i,\theta}-\E{F(X)|s_b^i,\theta}}{\eps}\right)^2\\&-\Var(F(X)|g_i(X_i),\theta). 
	\end{align*}
	Taking the maximum over $i$ for all secret attributes and over all $\theta \in \Theta$, we require 
	\begin{equation*}
	\sigma^2\ge \max_{i\in C}\left[  2\log (1.25/\delta)(\Delta_i F/\eps)^2- \min_{\theta \in \Theta}\Var(F(X)|g_i(X_i),\theta)\right] 
	\end{equation*}
	which will ensure that the ratio of the probabilities that the algorithm $\mathcal{M}(X)$ outputs a query value in any subset $T$ on any pair of secrets for any $\theta \in \Theta$ is bounded by $\eps$ with probability at least $1-\delta$.
\end{proof}

High probability additive accuracy bounds on the output of Algorithm \ref{algo:gaussian} can be derived using tail bounds on the noise term $Z$ based on its variance $\sigma^2$. The formal accuracy guarantee is stated in Theorem~\ref{thm.gaussacc}, which follows immediately from tail bounds of a Gaussian.

\begin{restatable}{theorem}{gaussacc}\label{thm.gaussacc}
	The Attribute-Private Gaussian Mechanism \\ APGM($X, F, \{g_i\}, C, \{S, \pufQ, \Theta \}, \epsilon, \delta$) in Algorithm \ref{algo:gaussian}  is $(\alpha,\beta)$-accurate for any $\beta>0$ and 
	$$\alpha=\sqrt{\max\{0,\max_{i\in C}\{(c\Delta_i F/\epsilon)^2-\min_{\theta \in \Theta}\Var(F(X)|g_i(X_i),\theta)\}\}}\Phi^{-1}(1-\frac{\beta}{2}),$$
	where $c=\sqrt{2\log (1.25/\delta)}$ and $\Phi$ is the CDF of the standard normal distribution.
\end{restatable}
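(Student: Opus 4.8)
The plan is to reduce the accuracy claim to a one-dimensional Gaussian tail bound. The notion of $(\alpha,\beta)$-accuracy I will use is the standard one: with probability at least $1-\beta$ over the internal randomness of the mechanism, the reported answer differs from the true value $F(X)$ by at most $\alpha$, i.e.\ $P(\abs{\mathcal{M}(X)-F(X)}\le \alpha)\ge 1-\beta$. The key structural observation is that the error introduced by Algorithm \ref{algo:gaussian} is exactly its additive noise term: when $\sigma^2>0$ the mechanism returns $F(X)+Z$ with $Z\sim\mathcal{N}(0,\sigma^2)$, so $\mathcal{M}(X)-F(X)=Z$; and when $\sigma^2=0$ the mechanism returns $F(X)$ exactly, so the error is $0$ and the claim holds vacuously (since $\alpha=0$ in that case). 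Because $Z$ is sampled independently of $X$ and of $\theta$, no conditioning subtlety arises and the bound is purely a statement about the marginal law of $Z$.

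First I would substitute the variance prescribed by the algorithm, namely $\sigma^2=\max\{0,\max_{i\in C}\{(c\Delta_i F/\epsilon)^2-\min_{\theta\in\Theta}\Var(F(X)|g_i(X_i),\theta)\}\}$, so that by the statement of the theorem $\sigma=\alpha/\Phi^{-1}(1-\tfrac{\beta}{2})$ whenever $\sigma>0$. The remaining step is to control $P(\abs{Z}\le\alpha)$ for $Z\sim\mathcal{N}(0,\sigma^2)$. Standardizing via $Z/\sigma\sim\mathcal{N}(0,1)$ gives
\[
P(\abs{Z}\le\alpha)=P\!\left(\abs{Z/\sigma}\le \alpha/\sigma\right)=2\Phi(\alpha/\sigma)-1.
\]
Plugging in $\alpha/\sigma=\Phi^{-1}(1-\tfrac{\beta}{2})$ then yields $2\Phi(\Phi^{-1}(1-\tfrac{\beta}{2}))-1=2(1-\tfrac{\beta}{2})-1=1-\beta$, which establishes the accuracy guarantee with equality at the stated $\alpha$.

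I do not anticipate a genuine obstacle, as the paper itself remarks that the result "follows immediately from tail bounds of a Gaussian." The only points requiring a little care are (i) handling the $\sigma^2=0$ branch separately, where the mechanism is exact and accuracy holds trivially for $\alpha=0$; and (ii) confirming that the outer $\max\{0,\cdot\}$ and the inner maximization over $i\in C$ appearing inside the formula for $\alpha$ match precisely the variance $\sigma^2$ actually selected by Algorithm \ref{algo:gaussian}, so that the standardization identity $\alpha/\sigma=\Phi^{-1}(1-\tfrac{\beta}{2})$ is an exact equality rather than merely an inequality. With these two bookkeeping details in place, the one-dimensional bound is both necessary and sufficient, and the theorem follows.
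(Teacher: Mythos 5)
Your proposal is correct and matches the paper's intended argument exactly: the paper gives no written proof beyond noting the result ``follows immediately from tail bounds of a Gaussian,'' and your derivation---identifying the error with the noise $Z\sim\mathcal{N}(0,\sigma^2)$, checking that the algorithm's final $\sigma^2$ equals the $\max\{0,\cdot\}$ expression in the theorem, and standardizing to get $P(\abs{Z}\le\alpha)=2\Phi(\Phi^{-1}(1-\tfrac{\beta}{2}))-1=1-\beta$---is precisely that tail-bound computation, with the $\sigma^2=0$ branch handled correctly as the exact-output case.
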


 In general, if $F(X)$ is independent of, or only weakly correlated with the protected functions $g_i(X_i)$, then no noise is needed is preserve dataset attribute privacy, and the mechanism can output the exact answer $F(X)$. On the other hand, if $F(X)$ is highly correlated with $g_i(X_i)$, we then consider a tradeoff between the sensitivity and the variance of $F(X)$. If the variance of $F(X)$ is relatively large, then $F(X)$ is inherently private, and less noise is required. If the variance of $F(X)$ is small and the sensitivity of $F(X)$ is large,
the mechanism must add a noise term with large $\sigma^2$, resulting in low accuracy with respect to the true answer.  To make these statements more concrete and understandable, Section \ref{sec:gausexample} provides a concrete instantiation of Algorithm \ref{algo:gaussian}.

\subsection{Privacy guarantees without Gaussian assumptions}\label{s.nogauss}

A natural question is how we can apply the Attribute-private Gaussian Mechanism when the distributional assumptions of Section \ref{sec:gaus_precise} do not hold. That is, when the actual distribution of $F(X)$ conditioned on $g_i(X_i)$ is not Gaussian distributed.
Our idea is based on using a collection of Gaussian distributions to approximate the actual distributions.
In this section, we show that the Attribute-private Gaussian Mechanism can still be applied in this case to achieve distributional attribute privacy.

We quantify this distributional closeness using $\eta$-approximate max-divergence. The max-divergence and approximate max-divergence are defined as follows.
\begin{definition}[Max-Divergence]
	Let $p$ and $q$ be two distributions with the same support. The max-divergence $D(p||q)$ between them is defined as: 
	$$D(p||q)=\sup_{T \subset \text{Support}(p)} \log \frac{\Pr(p(x)\in T)}{\Pr(q(x)\in T)}.$$
\end{definition}

\begin{definition}[$\eta$-Approximate Max-Divergence]
	Let $p$ and $q$ be two distributions. The $\eta$-approximate\footnote{The approximation parameter is typically named $\delta$ in the literature; we use $\eta$ instead to avoid confusion with the privacy parameter.} max-divergence between them is defined as:
	$$D^\eta(p||q)=\sup_{T \subset \text{Support}(p): \Pr[p(x)\in T]\ge \eta} \log \frac{\Pr(p(x)\in T)-\eta}{\Pr(q(x)\in T)}.$$
\end{definition}

We will consider the following variant of $\eta$-approximate max-divergence:
\begin{equation}\label{appro_var}
\mathcal{D}^\eta(f_{s_a^i,\theta}, \tilde{f}_{s_a^i,\theta}):=\max\{D^{\eta}(f_{s_a^i,\theta}||\tilde{f}_{s_a^i,\theta}), D^{\eta}(\tilde{f}_{s_a^i,\theta}||f_{s_a^i,\theta})\}.
\end{equation}

Formally, let $f_{s_a^i,\theta}$ denote the actual conditional distributon of $F(X)$ given the secret $s_a^i \in S$ and $\theta  \in \Theta$. 
For each $f_{s_a^i,\theta}$, let $\tilde{f}_{s_a^i,\theta}$ denote a Gaussian appromation to $f_{s_a^i,\theta}$. 
For any $\eta>0$, let $\lambda_{\eta}$ be the bound such that for every $s_a^i$, the (variant) approximate max-divergence $\mathcal{D}^{\eta}(f_{s_a^i,\theta}, \tilde{f}_{s_a^i,\theta}) \le \lambda_{\eta}$. That is, $\lambda_{\eta}$ is a constant determined by $\eta$.
For any fixed $\theta \in \Theta$ and $i\in C$, the collection of $\{\tilde{f}_{s_a^i,\theta}\}$ is chosen to have constant variance for any $s_a^i\in S$. We denote this variance as $\Var (\tilde{f}_{i,\theta})$. The Attribute-Private Gaussian Mechnism for non-Gaussian data of Algorithm \ref{algo:gaussian2} allows the analyst to choose a set of Gaussian approximations $\{\tilde{f}_{s_a^i,\theta}\}$ for each secret $s_a^i \in S$ and $\theta  \in \Theta$, and use it instead of the actual conditional distribution of $F(X)$ in the rest of the algorithmic steps, which are the same as in the Attribute-Private Gaussian Mechanism of Algorithm~\ref{algo:gaussian}.

\begin{algorithm}[tbh]
	\centering
	\caption{Attribute-Private Gaussian Mechanism for non-Gaussian data, APGMnG($X, F, \{g_i\}, C, \{S, \pufQ, \Theta \}, \epsilon, \delta$) .} 
	\begin{algorithmic}
		\State \textbf{Input:} dataset $X$, query $F$, functions $g_i$ for protected attributes $i\in C$, framework $\{S, \pufQ, \Theta \}$ , privacy parameters $\epsilon$, $\delta$
		\State Set $\sigma^2=0$, $c=\sqrt{2\log (1.25/\delta)}$.
		\State \textbf{for} each $i \in C$ \textbf{do}
		\State \indent Choose a Gaussian approximation $\tilde{f}_{s_a^i,\theta}$ for each secret \\ \hspace{0.55cm} $s_a^i \in S$ and $\theta  \in \Theta$. 
		\State \indent Set $\Delta_i F=\max_{\theta\in \Theta}\max_{(s_a^i,s_b^i)\in \pufQ}\abs{\E{\tilde{f}_{s_a^i,\theta}}-\E{\tilde{f}_{s_b^i,\theta}}}$.
		\State \indent \textbf{if} $(c\Delta_i F/\epsilon)^2-\min_{\theta \in \Theta}\Var (\tilde{f}_{i,\theta})\ge \sigma^2$ \textbf{then}
		\State \indent \indent Set $\sigma^2=(c\Delta_i F/\epsilon)^2-\min_{\theta \in \Theta}\Var (\tilde{f}_{i,\theta})$.
		\State \textbf{if} $\sigma^2 > 0$ \textbf{then}
		\State \indent Sample $Z\sim\mathcal{N}(0, \sigma^2)$.
		\State \indent Return $F(X)+Z$.
		\State \textbf{else} Return $F(X)$.
	\end{algorithmic}\label{algo:gaussian2}
\end{algorithm}

The following theorem states that if the Gaussian approximations in Algorithm \ref{algo:gaussian2} are close to the actual distributions of $F(X)$ conditioned on $g_i(X_i)$, then the loss in privacy is not too large. We note that Theorem~\ref{thm.gausspriv2}  involves the variant definition of $\eta$-approximate max-divergence introduced in Definition \ref{appro_var}, which does not require that the actual distribution has the same support as the Gaussian distribution; thus, it is also applicable for all discrete distributions or finite-support distributions. 

\begin{restatable}{theorem}{gausspriv2}\label{thm.gausspriv2}
	The Attribute-Private Gaussian Mechanism for non-Gaussian data APGMnG($X, F, \{g_i\}, C, \{S, \pufQ, \Theta \}, \epsilon, \delta$) in Algorithm \ref{algo:gaussian2}  is $(\eps+2\lambda_{\eta}, \exp(\lambda_{\eta}) \delta+\eta)$-dataset attribute private, when the Gaussian approximation $\tilde{f}_{s_a^i,\theta}$ satisfies $\mathcal{D}^{\eta}(f_{s_a^i,\theta}, \tilde{f}_{s_a^i,\theta}) \leq \lambda_{\eta}$  for any $\eta>0$ for all   $s_a^i \in S$ and $\theta  \in \Theta$.
\end{restatable}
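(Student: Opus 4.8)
The plan is to reduce the analysis to the already-proved Gaussian case of Theorem~\ref{thm.gausspriv}, applied to the \emph{approximating} distributions, and then transfer that guarantee back to the true output distributions using the approximate-max-divergence hypothesis. Fix a pair $(s_a^i, s_b^i) \in \pufQ$ and a $\theta \in \Theta$. Write $M_{s_a^i,\theta}$ for the true conditional output distribution, i.e.\ the law of $F(X) + Z$ given $s_a^i,\theta$, which is the convolution $f_{s_a^i,\theta} * \mathcal{N}(0,\sigma^2)$; and write $\tilde{M}_{s_a^i,\theta} := \tilde{f}_{s_a^i,\theta} * \mathcal{N}(0,\sigma^2)$ for the corresponding idealized output built from the Gaussian approximation. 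Because Algorithm~\ref{algo:gaussian2} computes $\Delta_i F$ and $\Var(\tilde{f}_{i,\theta})$ entirely from the approximations $\{\tilde{f}_{s_a^i,\theta}\}$, which are Gaussian with a common variance, Theorem~\ref{thm.gausspriv} applies verbatim to the idealized distributions: the chosen $\sigma^2$ guarantees $(\eps,\delta)$-indistinguishability $\tilde{M}_{s_a^i,\theta} \approx_{(\eps,\delta)} \tilde{M}_{s_b^i,\theta}$ for every secret pair and every $\theta$.

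Second, I would transfer the approximation guarantee from $f$ to $M$. The map $x \mapsto x + Z$ with $Z \sim \mathcal{N}(0,\sigma^2)$ independent of everything is a fixed (randomized) post-processing channel applied identically to $f_{s_a^i,\theta}$ and $\tilde{f}_{s_a^i,\theta}$. Since approximate-max-divergence indistinguishability is closed under post-processing, adding this noise cannot increase it, so the hypothesis $\mathcal{D}^\eta(f_{s_a^i,\theta}, \tilde{f}_{s_a^i,\theta}) \le \lambda_\eta$ passes through the convolution to give $\mathcal{D}^\eta(M_{s_a^i,\theta}, \tilde{M}_{s_a^i,\theta}) \le \lambda_\eta$, equivalently $M_{s_a^i,\theta} \approx_{(\lambda_\eta,\eta)} \tilde{M}_{s_a^i,\theta}$, and likewise for the $b$-side secret. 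The variant divergence $\mathcal{D}^\eta$ of Equation~\eqref{appro_var} is used precisely so that this step does not require $f$ and the Gaussian to share a support, which is what lets the result extend to discrete and finite-support distributions.

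Third, I would chain the three relations
$$M_{s_a^i,\theta} \approx_{(\lambda_\eta,\eta)} \tilde{M}_{s_a^i,\theta}, \quad \tilde{M}_{s_a^i,\theta} \approx_{(\eps,\delta)} \tilde{M}_{s_b^i,\theta}, \quad \tilde{M}_{s_b^i,\theta} \approx_{(\lambda_\eta,\eta)} M_{s_b^i,\theta}$$
using the triangle inequality for approximate indistinguishability: if $P \approx_{(\eps_1,\delta_1)} Q$ and $Q \approx_{(\eps_2,\delta_2)} R$, then $P \approx_{(\eps_1+\eps_2,\; \delta_1 + e^{\eps_1}\delta_2)} R$. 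Applying this twice composes the privacy losses additively into $\eps + 2\lambda_\eta$ and the failure probabilities into a bound of the form $e^{\lambda_\eta}\delta + \eta$ (together with the lower-order $\eta$ contributions this composition produces). Taking the worst case over secret pairs $(s_a^i,s_b^i)$ and over $\theta \in \Theta$, exactly as in Theorem~\ref{thm.gausspriv}, then yields the claimed $(\eps + 2\lambda_\eta,\ e^{\lambda_\eta}\delta + \eta)$-dataset attribute privacy.

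I expect the main obstacle to be the careful bookkeeping in the third step: one must first establish (or cite) the triangle inequality for $(\eps,\delta)$-indistinguishability and then track exactly how the two slack parameters $\delta$ and $\eta$ multiply through the two compositions, since the additive budget is where any looseness hides. A secondary point requiring care is justifying the post-processing/data-processing step for the specific asymmetric variant $\mathcal{D}^\eta$ rather than for the standard symmetric max-divergence.
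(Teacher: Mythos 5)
Your overall architecture---reduce to Theorem~\ref{thm.gausspriv} applied to the idealized mechanism built from the $\tilde{f}_{s_a^i,\theta}$'s, transfer the approximation hypothesis through the convolution by post-processing, then chain---is genuinely different from the paper's proof, and its first two steps are correct (the idealized outputs do satisfy $(\eps,\delta)$-indistinguishability, and $\mathcal{D}^\eta$-closeness does survive the addition of independent noise). But the third step fails to give the stated parameters, and this is a real gap rather than bookkeeping. Chaining the three relations with the triangle inequality you quote gives, in either ordering, $M_{s_a^i,\theta}(T)\le e^{\eps+2\lambda_\eta}M_{s_b^i,\theta}(T)+e^{\lambda_\eta}\delta+\eta+e^{\eps+\lambda_\eta}\eta$: the additive slack $\eta$ is incurred once per approximation link, and one of those two occurrences is necessarily inflated by the factor $e^{\eps+\lambda_\eta}$ it picks up when passing through the other two links. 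The theorem claims a single $+\eta$, and $e^{\eps+\lambda_\eta}\eta$ is not lower-order---it exceeds $\eta$. Set-level $(\lambda_\eta,\eta)$-indistinguishability is too weak an interface to recover the claimed constant by composition alone.

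The paper gets the claimed constants by keeping the argument pointwise and paying for the approximation failure only once, globally. It reads the hypothesis $\mathcal{D}^{\eta}(f_{s_a^i,\theta}, \tilde{f}_{s_a^i,\theta}) \le \lambda_{\eta}$ as saying that with probability at least $1-\eta$ a good event $E_\eta$ holds, on which the \emph{pure} max-divergences $D(f_{s_a^i,\theta}\|\tilde{f}_{s_a^i,\theta})$ and $D(\tilde{f}_{s_a^i,\theta}\|f_{s_a^i,\theta})$ are at most $\lambda_\eta$, i.e.\ densities agree within a factor $e^{\lambda_\eta}$ pointwise. Conditioned on $E_\eta$, it partitions the output space into $R_1$ (where the idealized Gaussian-mechanism likelihood ratio is at most $e^{\eps}$) and $R_2$ (of idealized mass at most $\delta$); on $T\cap R_1$ a three-factor pointwise decomposition of the density ratio---true-vs-approximate for $s_a^i$, approximate-vs-approximate, approximate-vs-true for $s_b^i$---gives $e^{\eps+2\lambda_\eta}$, and on $T\cap R_2$ the true mass is at most $e^{\lambda_\eta}\delta$. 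Removing the conditioning then adds $\Pr[E_\eta^c]\le\eta$ exactly once, yielding $e^{\eps+2\lambda_\eta}\Pr[\cdot\,|s_b^i,\theta]+e^{\lambda_\eta}\delta+\eta$. To repair your proof you would need to replace the two set-level transfer steps by this pointwise/conditioning argument (equivalently, work from the probabilistic pure-divergence reading of the hypothesis that the paper actually relies on); your post-processing observation remains useful, but the triangle-inequality chaining must go.
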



\begin{proof}

	We first analyze the probability of the output conditioned on the event that \\ $\{\max_{i\in C}\max_{s_a^i\in S}\max\{D(f_{s_a^i,\theta}||\tilde{f}_{s_a^i,\theta}), D(\tilde{f}_{s_a^i,\theta}||f_{s_a^i,\theta})\}\le \lambda_{\eta}\}$.
    Fix any pair of secrets $(s_a^i,s_b^i) \in \pufQ$ for a fixed sensitive attribute $X_i$ under any $\theta \in \Theta$. 
    Let $Z\sim \mathcal{N}(0,\sigma^2)$ denote the Gaussian noise added in Algorithm \ref{algo:gaussian2}. Let us partition $\mathbb{R}$ as $\mathbb{R}=R_1\cup R_2$, where $$R_1=\{F(X)+Z\in \mathbb{R}: |F(X)+Z|\le c\Delta_iF/\eps\},$$ and $$R_2=\{F(X)+Z\in \mathbb{R}:|F(X)+Z| > c\Delta_iF/\eps\}.$$ Fix any subset $T\subseteq \mathbb{R}$, and define 
    $T_1=T\cap R_1$ and $T_2=T\cap R_2.$
    
    For any $w\in T_1$, we can write the probability ratio of seeing the output $w$ as follows:
    \begin{align}
    &\frac{\Pr(F(X)+Z=w|s_a^i,\theta)}{\Pr(F(X)+Z=w|s_b^i,\theta)}\notag\\
    =&\frac{\Pr(F(X)+Z=w|F \sim f_{s_a^i,\theta})}{\Pr(F(X)+Z=w|F \sim \tilde{f}_{s_a^i,\theta})}\cdot \frac{\Pr(F(X)+Z=w|F\sim \tilde{f}_{s_b^i,\theta})}{\Pr(F(X)+Z=w|F\sim f_{s_b^i,\theta})}\notag\\
    &\cdot \frac{\Pr(F(X)+Z=w|F\sim \tilde{f}_{s_a^i,\theta})}{\Pr(F(X)+Z=w|F\sim \tilde{f}_{s_b^i,\theta})}. \label{eq.dev}
    \end{align}
    For any $w\in T_1$, the Attribute-Private Gaussian Mechanism for non-Gaussian data ensures that the last ratio in Equation~\eqref{eq.dev} is bounded above by $\exp(\eps)$. For the first ratio in Equation~\eqref{eq.dev}, since the generation process for $z$ is indepedent of the data, we have 
    \begin{align*}
    &\frac{\Pr(F(X)+Z=w|F \sim f_{s_a^i,\theta})}{\Pr(F(X)+Z=w|F \sim \tilde{f}_{s_a^i,\theta})}\\
    =&\frac{\int_f \Pr(Z=w-f)\Pr(F(X)=f|F \sim f_{s_a^i,\theta})df}{\int_f \Pr(Z=w-f)\Pr(F(X)=f|F \sim \tilde{f}_{s_a^i,\theta})df}\\
    \le&\max_f \frac{\Pr(F(X)=f|F \sim f_{s_a^i,\theta})}{\Pr(F(X)=f|F \sim \tilde{f}_{s_a^i,\theta})}\\
    \le&\exp(\lambda_{\eta})
    \end{align*}
    Similarly, we can bound the second ratio by $\exp(\lambda_{\eta})$. Thus, Equation~\eqref{eq.dev} is bounded by $\exp(\eps+2\lambda_{\eta})$, which is equivalent to 
    \begin{equation}\label{eq.T1}
    \Pr(F(X)+Z\in T_1| s_a^i,\theta)\le \exp(\eps+2\lambda_{\eta}) \Pr(F(X)+Z\in T_1| s_b^i,\theta)
    \end{equation}
    We also bound the probability that the output belongs to the subset $T_2$ as follows:
    \begin{align}
    \Pr(F(X)+Z\in T_2| s_a^i,\theta)\le &\exp(\lambda_{\eta}) \Pr(F(X)+Z\in T_2| F \sim \tilde{f}_{s_a^i,\theta})\notag\\
    \le &\exp(\lambda_{\eta}) \delta. \label{eq.T2}
    \end{align}
    Then, by (\ref{eq.T1}) and (\ref{eq.T2}), we have 
\begin{align}
&\Pr(F(X)+Z\in T|s_a^i,\theta)\notag\\=&\Pr(F(X)+Z\in T_1|s_a^i,\theta)+\Pr(F(X)+Z\in T_2|s_a^i,\theta)\notag\\
\le &  \exp(\eps+2\lambda_{\eta}) \Pr(F(X)+Z\in T_1| s_b^i,\theta)+\exp(\lambda_{\eta}) \delta. \label{eq.good}
\end{align}   
 
 We then analyze the probability of the output when \\ $\max_{i\in C}\max_{s_a^i\in S}\max\{D(f_{s_a^i,\theta}||\tilde{f}_{s_a^i,\theta}), D(\tilde{f}_{s_a^i,\theta}||f_{s_a^i,\theta})\}$ is bounded by $\lambda_{\eta}$ with probability at least $1-\eta$, which is equivalent to the $\eta$-approximate max-divergence $\mathcal{D}^{\eta}(f_{s_a^i,\theta}, \tilde{f}_{s_a^i,\theta})$ is bounded above by $\lambda_{\eta}$.
 For any $\eta>\bar{\delta}$, define this high probability event as follows:
	$$E_{\eta}:=\{\max_{i\in C}\max_{s_a^i\in S}\max\{D(f_{s_a^i,\theta}||\tilde{f}_{s_a^i,\theta}), D(\tilde{f}_{s_a^i,\theta}||f_{s_a^i,\theta})\}\le \lambda_{\eta}\}.$$
	Let $E^c_{\eta}$ denote the complement set. By the choice of $\lambda_{\eta}$, we have $\Pr[E^c_{\eta}]\le \eta$. Then by (\ref{eq.good}) and the observation that $\Pr[E^c_{\eta}]\le \eta$, we have that for any subset $T$,
	\begin{align*}
	&\Pr[F(X)+Z\in T|s_a^i,\theta]\\\le &\Pr[F(X)+Z\in T|s_a^i,\theta, E_{\eta}]\Pr[E_{\eta}]+\Pr[E^c_{\eta}]\\
	\le &(\exp(\eps+2\lambda_{\eta})\Pr[F(X)+Z\in T|s_b^i,\theta, E_{\eta}]+\exp(\lambda_{\eta}) \delta)\Pr[E_{\eta}]\\&+\Pr[E^c_{\eta}]\\
	= &\exp(\eps+2\lambda_{\eta})\Pr[F(X)+Z\in T|s_b^i,\theta, E_{\eta}]\Pr[E_{\eta}]\\&+\exp(\lambda_{\eta}) \delta\Pr[E_{\eta}]+\Pr[E^c_{\eta}]\\
	\le &\exp(\eps+2\lambda_{\eta})Pr[F(X)+Z\in T|s_b^i,\theta]+\exp(\lambda_{\eta}) \delta+\eta,
	\end{align*}
    completing the proof.
\end{proof}

\subsection{Instantiation with Gaussian distributed data}
\label{sec:gausexample}
In this section, we show an instantiation of our Attribute-Private Gaussian Mechanism when the joint distribution of the $m$ attributes is multivariate Gaussian. The privacy guarantee of this mechanism requires that $F(X)|g_i(X_i)$ is Gaussian distributed, which is satisfied when $g_i$ and $F$ are linear with respect to the entries of $X$. For simplicity of illustration, we will choose both $F(X)$ and all $g_i(X_i)$ to compute averages.


As a motivating example, consider a dataset that consists of students' SAT scores $X_\sat$, heights $X_\hgt$, weights $X_\wgt$, and their family income $X_\inc$.  As a part of a wellness initiative, the school wishes to release the average weight of its students, so $F(X)=\frac1n \sum_{j=1}^n X_\wgt^j$.  The school also wants to prevent an adversary from inferring the average SAT scores or family income of their students, so $C=\{\sat,\inc\}$ and $g(X_i)=\frac1n \sum_{j=1}^n X_i^j$ for $i\in C$.


To instantiate our framework, let $s_a^i$ denote the event that $g(X_i)=a$, which means the average value of column $X_i$ is $a$. If $g(X_i)$ has support $\mathcal{U}^i$, then the set of secrets is $S=\{s^i_a: a \in \mathcal{U}^i, i\in C\}$, and the set of secret pairs is $\pufQ=\{(s_a^i,s_b^i): a, b \in \mathcal{U}^i, a \neq b , i\in C\}$. Each $\theta \in \Theta$ is a distribution over $n$ i.i.d.~samples from an underlying multivariate Gaussian distribution with mean $(\mu_1,\ldots,\mu_m)^T$ and covariance matrix
$(V_{ij})$, $i, j \in [m]$,
where $V_{ij}=V_{ji}$ is the covariance between $X_i$ and $X_j$ if $i\neq j$, and $V_{ii}$ is the variance of $X_i$. We note that the variable heights, weights and SAT score may not be Gaussian distributed in practice.
Hence, the choice of whether to use the Attribute-Private Gaussian Mechanism for Gaussian or non-Gaussian data
should be determined by the practitioner.

Suppose we want to guarantee $(\epsilon,\delta)$-dataset attribute privacy through the Attribute-Private Gaussian Mechanism. Then we need to first compute $\E{F(X)|s_a^i}$ and $\Var(F(X)|s_a^i)$ for each $i\in C$. Let~$j$ denote the index of the attribute averaged in $F(X)$. By the properties of a multivariate Gaussian distribution, the distribution of $F(X)$ conditional on $g(X_i)=a$ is Gaussian $\mathcal{N}(\bar{\mu}_a, \bar{V})$, where $\bar{\mu}_a=\mu_j+\frac{V_{ij}}{V_{ii}}(a-\mu_i)$ and $\bar{V}=\frac{1}{n}(V_{jj}-\frac{V_{ij}^2}{V_{ii}})$.
We define the diameter of $\mathcal{U}$ as $d(\mathcal{U})= \max_{a,b\in \mathcal{U}}\abs{a-b}$. The sensitivity is: $\Delta_i F=\max_{(s_a^i, s_b^i)\in \pufQ} \abs{\bar{\mu}_a-\bar{\mu}_b} =  \frac{V_{ij}}{V_{ii}}\max_{a,b \in \mathcal{U}}\abs{a-b}=\frac{V_{ij}}{V_{ii}}d(\mathcal{U})$.
To ensure $(\epsilon,\delta)$-dataset attribute privacy for protected attribute $X_i$, the variance of the Gaussian noise must be at least $(c\frac{V_{ij}d(\mathcal{U})}{V_{ii}\eps})^2-\frac{1}{n}(V_{jj}-\frac{V_{ij}^2}{V_{ii}})$ for $c=\sqrt{2\log (1.25/\delta)}$ as in Algorithm \ref{algo:gaussian}. Adding Gaussian noise with variance $\sigma^2=\max_{i\in C} \{ (c\frac{V_{ij}d(\mathcal{U})}{V_{ii}\eps})^2-\frac{1}{n}(V_{jj}-\frac{V_{ij}^2}{V_{ii}}) \}$
will provide $(\epsilon,\delta)$-dataset attribute privacy for all protected attributes.

We note that $\sigma^2$ is monotonically increasing with respect to $V_{ij}$.
That is, our Attribute-Private Gaussian Mechanism will add less noise to the output
if the query $F$ is about an attribute which has
a low correlation with the protected attributes.

So far we have discussed about the case when $\Theta$ only consists of one distribution, in order to show the impact of $V_{ij}$. For the general case, the sensitivity $\Delta_i F$ is $\max_{\theta\in\Theta} \frac{V_{ij}}{V_{ii}}d(\mathcal{U})$, and the noise is scaled with variance $\sigma^2=\max_{i\in C} \{ (c\max_{\theta\in\Theta}\frac{V_{ij}d(\mathcal{U})}{V_{ii}\eps})^2-\min_{\theta\in\Theta}\frac{1}{n}(V_{jj}-\frac{V_{ij}^2}{V_{ii}}) \}$.

\section{The Markov Quilt Mechanism for Distributional Attribute Privacy}\label{sec:distsecrets}

In this section we consider distributional attribute privacy, as introduced in Definition~\ref{def:distprivacy}, 
and develop a mechanism that satisfies this privacy definition.
Recall that in this setting, an analyst aims to release
$F(X)$ while protecting the realization of a random parameter $\phi_i$, which describes the conditional marginal distribution of the $i$th attribute, given the realization of all $\phi_k$ for $k\neq i$ for all other attributes.  This formalization implies that all (column) attribute vectors $X_i$ are mutually independent, conditional on the set of parameters $(\phi_1,\ldots,\phi_m)$. 

\if 0
\newcommand{\parent}{\mathsf{parent}}

\begin{definition}[Bayesian Hierarchical Model for attribute distributions]
		The model is described by a set of attributes $(\phi_1,\ldots,\phi_m)$ 
		and variables
		$X=\{X_1,\ldots, X_m\}$ where $\phi_i$,		
		 and a directed acyclic graph $G=(X,E)$ whose vertices are variables in $X$.
		The probabilistic dependence of $X$ included by the network can be written as:
	         $$P(X_1,\ldots,X_m, \phi_1,\ldots,\phi_m)=\Pi_{\forall i \in [m], k \in [\parent(i)|m]} P(X_i|X_k, \phi_k)  P(X_i | \phi_i) P(\phi_i).$$
\end{definition}

This model allows us to adopt the ``naive'' conditional independence assumption between $X_i$'s as follows.
Given that the dependencies among the parameters are modelled by a Bayesian network,
conditional on the set of parameters $(\phi_1,\ldots,\phi_m)$
we can assume that all attributes $X_i$ are mutually independent.

Consider an example for a dataset $X$ with $n$ individual records and two attributes.
Let $X_2$ be a nonsensitive attribute that depends on a sensitive attribute $X_1$.
In the Bayesian hierarchical model,
for $j$'s record, attribute value $X_1^j$ governs the data generating process for attribute value $X_2^j$.
Meanwhile, the data $X_1^1,\ldots,X_1^n$ are generated from a distribution governed by a hyperparameter $\phi_1$, which characterizes the distribution of attribute $X_1$. The hyperparameter $\phi_1$ follows its own distribution $P(\phi_1)$. The joint distribution is described by
\begin{equation}
P(X_1,X_2,\phi_1)=P(X_2|X_1,\phi_1)P(X_1|\phi_1)P(\phi_1).
\end{equation}
Then $\phi_2$ is the distribution of $X_2$ conditional on a specific $\phi_1$.
We assume that $X_1$ and $X_2$ are independent conditional on $(\phi_1,\phi_2)$. 
Hence, we can model the dependence between the two attributes only through the dependence between $\phi_1,\phi_2$.
The goal in distributional attribute privacy is to keep the parameter $\phi_1$  private.

In general, for multiple attributes,  we adopt the ``naive'' conditional independence assumptions: assume that all attributes $X_i$ are mutually independent, conditional on the set of parameters $(\phi_1,\ldots,\phi_m)$. The dependenies among the parameters is modeled by a Bayesian network.

\fi

\subsection{Attribute-Private Markov Quilt Mechanism}

We base our mechanism on the idea of a \emph{Markov Quilt}, which partitions a network of correlated random variables into those which are ``near'' ($X_N$) a particular variable $X_i$, and those which are ``remote'' ($X_R$).  Intuitively, we will use this to partition attributes into those which are highly correlated ($X_N$) with our sensitive attributes, and those which are only weakly correlated ($X_R$).

\begin{definition}[Markov Quilt]\label{def.markovquilt}
	A set of nodes $X_Q$ in a Bayesian network $G=(X,E)$ is a Markov Quilt for a node $X_i$ if deleting $X_Q$ partitions $G$ into parts $X_N$ and $X_R$ such that $X_i\in X_N$ and $X_R$ is independent of $X_i$ conditioned on $X_Q$.
\end{definition}


We quantify the effect that changing the distribution parameter $\phi_i$ of a sensitive attribute $X_i$ has on a set of distribution parameters $\phi_A$ (corresponding to a set of attributes $X_A$) using the \emph{max-influence}. Since attributes are mutually independent conditioned on the vector $(\phi_1,\ldots,\phi_m)$, the max-influence is sufficient to quantify how much a change of all values in attribute $X_i$ will affect the values of $X_A$.
If $\phi_i$ and $\phi_A$ are independent, then $X_A$ and $X_i$ are also independent, and the max-influence is 0.


\begin{definition}\label{def:maxInfD} The \emph{max-influence} of an attribute $X_i$ on a set of attributes $X_A$ under $\Theta$ is:
\[ e_\Theta(X_A|X_i)=\sup_{\theta\in\Theta}\max_{\phi_i^a,\phi_i^b\in\Phi_i}\max_{\phi_A\in\Phi_A} \log \frac{P(\phi_A|\phi_i^a,\theta)}{P(\phi_A|\phi_i^b,\theta)}. \]
\end{definition}





The sensitivity of $F$ with respect to a set of attributes $A \subseteq [m]$, denoted $\Delta_A F$, is defined as the maximum change that the value of $F(X)$ caused by changing all columns $X_A$. Formally, we say that two datasets $X, X'$ are \emph{$A$-column-neighbors} if they are identical except for the columns corresponding to attributes in $A$, which may be arbitrarily different.  Then $\Delta_A F = \max_{X,X' \; A\text{-column-neighbors}} \abs{F(X)-F(X')}$. 
Although changing $X_A$ may lead to changes in other columns,
these changes are governed by the max influence, and will not affect attributes that are nearly independent of $X_i$.

Observe that the event that $X_R$ and $X_i$ are independent conditional on $X_N$
is equivalent to the event when $\phi_R$ and $\phi_i$ are independent conditional on $\phi_N$, which is why we can define the Markov Quilt based on $X_i$. However,
since the distribution of $X_i$s are governed by $\phi_i$s, the max-influence score must be computed
using $\phi_i$s rather than $X_i$s.




\paragraph{The mechanism.}
We extend the idea of the Markov Quilt Mechanism in \cite{song2017pufferfish} to the attribute privacy setting as follows. Let $A \subseteq [m]$ be a set of attributes over which $F$ is computed. For example, $F$ may compute the average of a particular attribute or a regression on several attributes.
At a high level, we add noise to the output of $F$ scaled based on the sensitivity of $F$ with respect to $X_Ns$.
However, when computing the sensitivity of $F$ we only need to consider sensitivity of $F$ with respect to ${A\cap N}$,
i.e.,  the queried set of attributes $A$ that are in the ``nearby'' set of the protected attribute.
If the query $F$ is about attributes that are all in the ``remote'' set $X_{R_i}$ and the max-influence on the corresponding Markov quilt is less than the privacy parameter $\eps$, then
$\Delta_{A\cap N} F$ is simply 0 and the mechanism will not add noise to the query answer.




\if 0
\paragraph{The mechanism.}
We extend the idea of the Markov Quilt Mechanism in \cite{song2017pufferfish} to the attribute privacy setting
by adding noise scaled with the sensitivity of $F$ with respect to the ``nearby'' attributes of $X_i$ (where $i$ is a sensitive attribute).
The main insight behind our mechanism is that we only need to consider the sensitivity of $F$ concerned with the queried set of attributes $A$ that are in the ``nearby'' set of the protected attribute.
If the query $F$ is about some attributes that are all in the ``remote'' set and the max-influence on the corresponding Markov quilt is less than the privacy parameter $\eps$, then
$\Delta_{A\cap N} F$ is simply 0 and the mechanism will not add noise to the query answer. 
However, the attribute-private Markov Quilt Mechanism is inherently different from the original Markov Quilt Mechanism.
It is largely because the query $F$ in the attribute privacy setting is usually about a subset $A$ of the attributes rather than the whole {set}.

\olya{My rewrite of the above}
\fixme{We extend the idea of the Markov Quilt Mechanism in \cite{song2017pufferfish} to the attribute privacy setting as follows.
Let $X_{N}$ be a set of ``nearby'' attributes of $X_i$ as per Markov Quilt notation
where $i$ is a secret attribute.
\fixme{Let $A \subseteq [m]$ be a set of attributes over which $F$ is computed.}
At a high level, we add noise to the output of $F$ scaled based on the sensitivity of $F$ with respect to $X_{N}s$.
However, when computing the sensitivity of $F$ we only need to consider sensitivity of $F$ with respect to ${A\cap N}$,
i.e.,  the queried set of attributes $A$ that are in the ``nearby'' set of the secret attribute.
If the query $F$ is about attributes that are all in the ``remote'' set $X_{R}$ and the max-influence on the corresponding Markov quilt is less than the privacy parameter $\eps$, then
$\Delta_{A\cap N} F$ is simply 0 and the mechanism will not add noise to the query answer.}
\olya{I am also not sure about why would original Markov Quilt compute over all attributes if its $F$ it not over all attributes}
\fi

	\begin{algorithm}[tbh]
	\centering
		\caption{Attribute-Private Markov Quilt Mechanism, APMQM($X, F, A, C,\{S, \pufQ, \Theta \}, \epsilon$) for distributional attribute privacy.}
		\begin{algorithmic}
			\State \textbf{Input:} dataset $X$, query $F$, index set of queried attributes $A$, index set of sensitive attributes $C$, framework $\{S, \pufQ, \Theta\}$, privacy parameter $\epsilon$.
			\State \text{for} each $i\in C$ \textbf{do}
			\State \indent Set $b_i=\Delta_{A} F/\eps$.
			\State \indent Set $G_i:=\{(X_{Q},X_{N},X_{R}): e_\Theta(X_{Q}|X_i)\le \eps\}$ to be all possible Markov quilts of $X_i$ with max-influence less than $\eps$.
			\State \indent \textbf{if} $G_i \neq \emptyset$ \textbf{do}
			\State \indent \indent \textbf{for} each $(X_{Q},X_{N},X_{R}) \in G_i$ \textbf{do}
			\State \indent \indent \indent \textbf{if} $ \Delta_{A\cap N} F/(\eps-e_\Theta(X_{Q}|X_i)) \le b_i$ \textbf{then} 
			\State \indent \indent \indent \indent Set $b_i = \Delta_{A\cap N} F/(\eps-e_\Theta(X_{Q}|X_i))$.
			\State Sample $Z\sim\Lap(\max_{i\in C}b_i)$.
			\State Return $F(X)+Z$			
		\end{algorithmic}
		\label{alg:quilt}
	\end{algorithm}

%
%

\begin{restatable}{theorem}{markov}\label{thm.markov}
	The Attribute-Private Markov Quilt Mechanism APMQM($X,F, A, C,\{S, \pufQ, \Theta \}, \epsilon$) in Algorithm~\ref{alg:quilt} is $(\eps,0)$-distributional attribute private.
\end{restatable}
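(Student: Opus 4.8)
The plan is to verify the $(\eps,0)$-Pufferfish condition of Definition~\ref{def:pufferfish} directly for the framework of Definition~\ref{def:distprivacy}. Writing $W = F(X)+Z$ for the mechanism's output, it suffices to establish the pointwise density-ratio bound $P(W=t \mid \phi_i^a,\theta) \le \exp(\eps)\, P(W=t\mid \phi_i^b,\theta)$ for every realization $t$, every $i\in C$, every $\theta\in\Theta$, and every pair of parameter values $\phi_i^a,\phi_i^b\in\Phi^i$; integrating this over any $T\subseteq \mathrm{Range}(\mathcal{M})$ yields the set version with $\delta=0$. Since a secret $s_a^i$ is the event $\phi_i\in\Phi^i_a$, the bound for secret \emph{events} follows from the pointwise bound by a standard averaging argument: write $P(W=t\mid s_a^i,\theta)=\int_{\Phi^i_a} P(\phi_i\mid s_a^i,\theta)\, P(W=t\mid \phi_i,\theta)\, d\phi_i$ and apply the pointwise inequality inside the integral, averaging the numerator over $\phi_i\sim(\phi_i\mid s_a^i)$ and then the denominator over $\phi_i\sim(\phi_i\mid s_b^i)$. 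So the whole proof reduces to the pointwise bound, for which I would fix, for each $i$, the Markov quilt $(X_Q,X_N,X_R)$ attaining the minimizing value of $b_i$ in Algorithm~\ref{alg:quilt}; it satisfies $e_\Theta(X_Q\mid X_i)\le \eps$ and $\Delta_{A\cap N}F/b_i=\eps-e_\Theta(X_Q\mid X_i)$, so with the final scale $b=\max_{i\in C} b_i\ge b_i$ we have $\Delta_{A\cap N}F/b\le \eps-e_\Theta(X_Q\mid X_i)$. (The degenerate choice $X_Q=\emptyset$, $X_N=X$ is always available with $e_\Theta(\emptyset\mid X_i)=0$, which both shows $G_i\ne\emptyset$ and recovers the initialization $b_i=\Delta_A F/\eps$.)

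To prove the pointwise bound I would decompose the output distribution by conditioning on the quilt parameters $\phi_Q$, splitting the log-ratio into a \emph{remote} contribution controlled by the max-influence and a \emph{near} contribution controlled by the Laplace noise. Concretely,
\begin{equation*}
P(W=t\mid \phi_i^a,\theta)=\int P(\phi_Q\mid \phi_i^a,\theta)\, P(W=t\mid \phi_Q,\phi_i^a,\theta)\, d\phi_Q .
\end{equation*}
Definition~\ref{def:maxInfD} gives $P(\phi_Q\mid \phi_i^a,\theta)\le \exp(e_\Theta(X_Q\mid X_i))\, P(\phi_Q\mid \phi_i^b,\theta)$, so after also establishing the per-$\phi_Q$ near bound $P(W=t\mid \phi_Q,\phi_i^a,\theta)\le \exp(\Delta_{A\cap N}F/b)\, P(W=t\mid \phi_Q,\phi_i^b,\theta)$, I can pull both factors out of the integral to obtain $P(W=t\mid\phi_i^a,\theta)\le \exp(e_\Theta(X_Q\mid X_i)+\Delta_{A\cap N}F/b)\, P(W=t\mid\phi_i^b,\theta)$, whose exponent is at most $e_\Theta(X_Q\mid X_i)+(\eps-e_\Theta(X_Q\mid X_i))=\eps$, as desired.

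The main obstacle is the per-$\phi_Q$ near bound, because flipping $\phi_i^a\to\phi_i^b$ does not shift the queried near attributes by a fixed amount; it changes the \emph{distribution} of $X_{A\cap N}$ through $\phi_N$, so a naive sensitivity-plus-Laplace argument does not directly apply. The resolution uses the quilt's defining property $\phi_R\perp\phi_i\mid\phi_Q$. Given $\phi_Q$, write $U=(X_{A\cap Q},X_{A\cap R})$ and $V=X_{A\cap N}$, so that $F(X)=\tilde F(U,V)$. Because $X_{A\cap Q}$ is governed by the fixed $\phi_Q$ and $X_{A\cap R}$ is governed by $\phi_R$, whose conditional law given $\phi_Q$ is independent of $\phi_i$, the marginal law of $U$ given $\phi_Q$ is the \emph{same} under $\phi_i^a$ and $\phi_i^b$; only the conditional law of $V$ differs. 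Now for each fixed value $u$ of $U$, the definition of $\Delta_{A\cap N}F$ via $(A\cap N)$-column-neighbors says $\tilde F(u,\cdot)$ takes values in an interval of length at most $\Delta_{A\cap N}F$ as $V$ ranges, so the Laplace density $p_Z(t-\tilde F(u,v))$ has multiplicative range at most $\exp(\Delta_{A\cap N}F/b)$ over all $v$; hence $\mathbb{E}_{V\mid U=u}[p_Z(t-\tilde F(u,V))]$ lies between $m(u):=\min_v p_Z(t-\tilde F(u,v))$ and $M(u):=\max_v p_Z(t-\tilde F(u,v))$ for \emph{any} conditional law of $V$, with $M(u)\le \exp(\Delta_{A\cap N}F/b)\, m(u)$. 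Averaging over the common law of $U$ then sandwiches both $P(W=t\mid\phi_Q,\phi_i^a,\theta)$ and $P(W=t\mid\phi_Q,\phi_i^b,\theta)$ between $\mathbb{E}_U[m(U)]$ and $\mathbb{E}_U[M(U)]$, which differ by at most $\exp(\Delta_{A\cap N}F/b)$, giving the per-$\phi_Q$ bound. Chaining this with the max-influence step and the reductions above completes the proof; the only genuinely delicate point is this sandwich, which is exactly where both the conditional-independence structure of the quilt and the precise definition of $\Delta_{A\cap N}F$ are used.
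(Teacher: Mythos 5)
Your proof is correct, and it shares the same skeleton as the paper's: decompose the output distribution by conditioning on the quilt, bound the ``remote'' factor by the max-influence $e_\Theta(X_Q|X_i)$ from Definition~\ref{def:maxInfD}, bound the ``near'' factor by $\exp(\Delta_{A\cap N}F/b)$ via the Laplace density, and chain the two so the exponents sum to $\eps$. The differences are in execution, and they mostly work in your favor. The paper conditions on both $\phi_{R\cup Q}$ and the realized columns $X_{R\cup Q}$, so that the common part of the data is literally fixed, and then asserts in one line that since ``$F(X)$ can vary by at most $\Delta_{A\cap N}F$'' the near ratio is suitably bounded; you instead condition only on $\phi_Q$ and run an explicit sandwich argument over the common law of $U=(X_{A\cap Q},X_{A\cap R})$, which supplies the justification for precisely that asserted step --- and as you correctly flag, this is the genuinely delicate point, because switching $\phi_i^a \to \phi_i^b$ changes the \emph{distribution} of $X_{A\cap N}$ rather than shifting $F$ by a fixed amount, so a naive shift-based Laplace argument does not apply. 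Two further points of divergence: (i) the paper handles the case $G_i=\emptyset$ separately by appealing to the differentially private Laplace mechanism with scale $\Delta_A F/\eps$, whereas your degenerate quilt $X_Q=\emptyset$, $X_N=X$, $e_\Theta(\emptyset|X_i)=0$ subsumes that case under the same argument --- cleaner, and arguably more rigorous, since the DP appeal implicitly needs the same sandwich; (ii) for set-valued secrets $\Phi^i_a$ the paper pushes maxima over $a\in\Phi^i_a$, $b\in\Phi^i_b$ inside the ratios, while you average the pointwise bound over the conditional laws of $\phi_i$ given each secret event --- both are standard and valid reductions. In short: same approach, but your write-up fills in the two steps the paper leaves implicit.
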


\begin{proof}
	First, consider the case when $G=\emptyset$, which means there are no Markov quilt partitions such that the max-influence score is less than $\eps$.
	In this case, for a fixed secret attribute $X_i$, the mechanism will simply add Laplace noise scaled with $\Delta_A F$. The privacy follows from Laplace Mechanism in differential privacy. 
	
	Let us consider the case when $G\neq \emptyset$. Below we bound the probability distribution for a single outcome $w$.
	For the set of outcomes $T$, the proof can be extended to bound the integral of the probability distribution over the set $T$.
	We fix any pair of secrets $(s_a^i,s_b^i)\in \pufQ$ for a fixed secret attribute $X_i$ under  any $\theta \in \Theta$.
	Here $s_a^i$ denotes the event that $\phi_i=a$, and we consider the more general case later. Let $Z$ be the Laplace noise as generated in Algorithm~\ref{alg:quilt}. If there exists a Markov quilt for $X_i$, $i\in C$, with max-influence score less than $\eps$, then for any attribute $X_j \in A$:
	\begin{align}
	&\max_{a,b}\frac{P(\mathcal{M}(X)=w|\phi_i=a,\theta)}{P(\mathcal{M}(X)=w|\phi_i=b,\theta)} \notag\\
	&=\max_{a,b}\frac{P(F(X)+Z=w|\phi_i=a,\theta)}{P(F(X)+Z=w|\phi_i=b,\theta)} \notag\\
	=&\max_{a,b}\max_{{R\cup Q}}\frac{P(F(X)+Z=w|\phi_i=a,\phi_{R\cup Q}=\bar{v},\theta)}{P(F(X)+Z=w|\phi_i=b,\phi_{R\cup Q}=\bar{v},\theta)}\frac{P(\phi_{R\cup Q}=\bar{v}|\phi_i=a,\theta)}{P(\phi_{R\cup Q}=\bar{v}|\phi_i=b,\theta)} \notag\\
	=&\max_{a,b}\max_{{R\cup Q}}\frac{P(F(X)+Z=w|X_{R\cup Q}=x_{R\cup Q}, \phi_i=a,\theta)} {P(F(X)+Z=w|X_{R\cup Q}=x_{R\cup Q},\phi_i=b,\theta)}  \cdot  \frac{P(X_{R\cup Q}=x_{R\cup Q}|\phi_{R\cup Q}=\bar{v},\theta)}{P(X_{R\cup Q}=x_{R\cup Q}|\phi_{R\cup Q}=\bar{v},\theta)} \notag\\&
	 \qquad \qquad \cdot \frac{P(\phi_{R\cup Q}=\bar{v}|\phi_i=a,\theta)}{P(\phi_{R\cup Q}=\bar{v}|\phi_i=b,\theta)},\label{eq.mq1}
	\end{align}
	where the final equality Equation \eqref{eq.mq1} follows from the independence between $X_{R\cup Q}$ and $\phi_i$ given $\phi_{R\cup Q}=\bar{v}$, and $x_{R\cup Q}$ denotes a realization of the $X_{R\cup Q}$ columns. For a fixed $X_{R\cup Q}$, $F(X)$ can vary by at most $\Delta_{A\cap N} F$, and therefore, the first ratio is bounded by $\exp(\eps-\exp(e_\Theta(X_Q|X_i)))$. The second ratio in Equation \eqref{eq.mq1} is 1, and the third ratio in Equation \eqref{eq.mq1} is bounded by $\exp(e_\Theta(X_Q|X_i))$. Then Equation \eqref{eq.mq1} is bounded above by $\exp(\eps-\exp(e_\Theta(X_Q|X_i))) \exp(e_\Theta(X_Q|X_i)) = \exp(\eps)$. For the general case when $s^i_a := \mathbbm{1}[\phi_i \in \Phi^i_a] : \Phi^i_a \subset \Phi^i$, similarly, for any $\Phi^i_a$ and $\Phi^i_b$, we have, 
	\begin{align}
	&\frac{P(\mathcal{M}(X)=w|\phi_i \in \Phi^i_a,\theta)}{P(\mathcal{M}(X)=w|\phi_i  \in \Phi^i_b,\theta)} \notag\\
	\le&\max_{{R\cup Q}}\frac{P(F(X)+Z=w|X_{R\cup Q}=x_{R\cup Q}, \phi_i \in \Phi^i_a,\theta)} {P(F(X)+Z=w|X_{R\cup Q}=x_{R\cup Q},\phi_i \in \Phi^i_b,\theta)} \frac{P(\phi_{R\cup Q}=\bar{v}|\phi_i \in \Phi^i_a,\theta)}{P(\phi_{R\cup Q}=\bar{v}|\phi_i\in \Phi^i_b,\theta)}\\
	\le&\max_{{R\cup Q}} \max_{a\in \Phi^i_a, b\in \Phi^i_b}\frac{P(F(X)+Z=w|X_{R\cup Q}=x_{R\cup Q}, \phi_i=a,\theta)} {P(F(X)+Z=w|X_{R\cup Q}=x_{R\cup Q},\phi_i=b,\theta)} \max_{a\in \Phi^i_a, b\in \Phi^i_b} \frac{P(\phi_{R\cup Q}=\bar{v}|\phi_i=a,\theta)}{P(\phi_{R\cup Q}=\bar{v}|\phi_i=b,\theta)}. \label{eq.mq2}
	\end{align}
	The first ratio in Equation \eqref{eq.mq2} is bounded by $\exp(\eps-\exp(e_\Theta(X_Q|X_i)))$ and the second ratio is bounded by $\exp(e_\Theta(X_Q|X_i))$, so Equation \eqref{eq.mq2} is bounded above by $\exp(\eps)$, and the theorem follows.

\end{proof}


 \begin{wrapfigure}{r}{0.35\textwidth}
  \centering
    \includegraphics[width=0.34\textwidth]{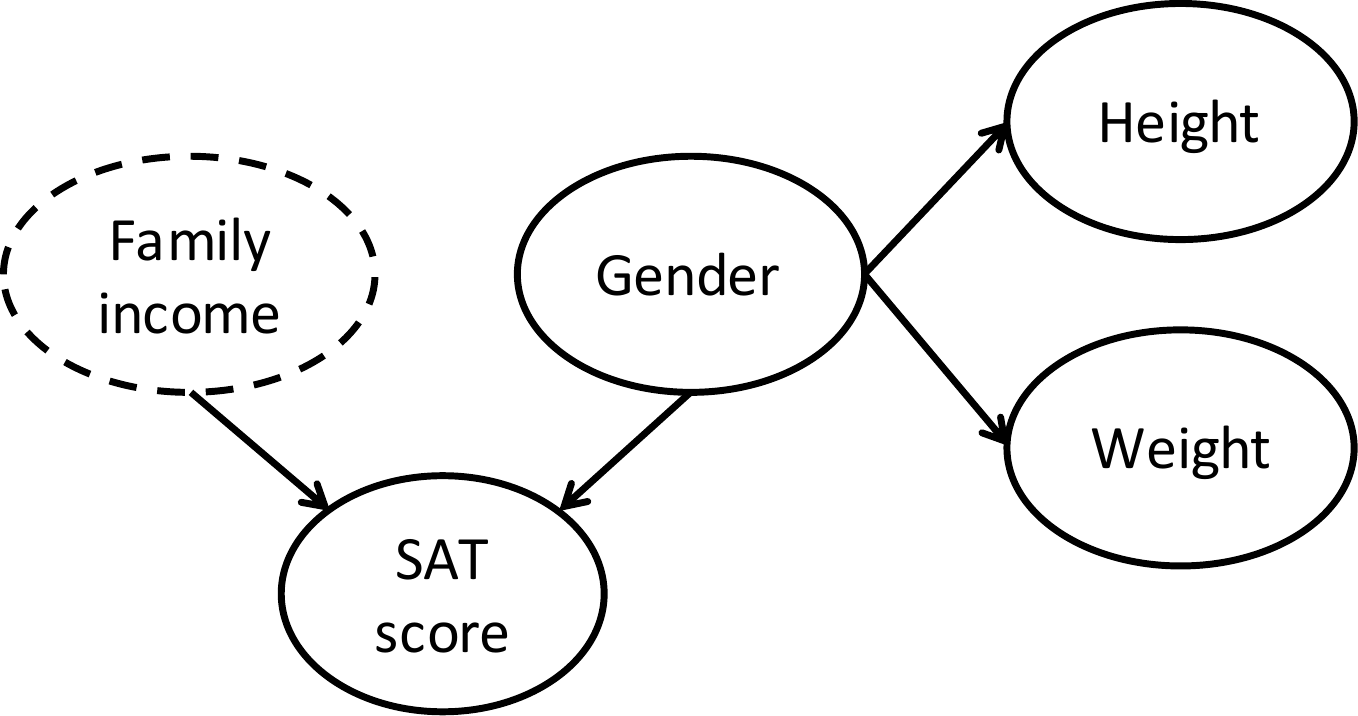}
\caption{\small Bayesian Network of five attributes where income is a sensitive attribute.}
\label{fig:example}
\end{wrapfigure}

\begin{example} Consider a dataset that consists of students' SAT scores $X_\sat$, heights $X_\hgt$, weights $X_\wgt$, gender $X_\gen$, and their family income $X_\inc$, where these variables form a Bayesian network as in Figure \ref{fig:example}.
The school wishes to release the number of students that are taller than $5'6''$,
while protecting the distribution of family income of their students with privacy parameter~$\eps$.
In this case, $C=\{\inc\}$, $A=\{\hgt\}$ and $F(X)=\sum_{j=1}^n \mathbbm{1}[X_\hgt^j>5'6'']$.
Consider a Markov quilt for $X_\inc$: ${Q}=\{\gen\}, N=\{\inc,\sat\}, R=\{\hgt,\wgt\}$.
 Then $A\cap N=\emptyset$, so we can safely release $F(X)=\sum_{i=1}^n\mathbbm{1}[X_\hgt^j>5'6'']$ without additional noise.

Next consider the case when the school wishes to release the number of students that are taller than $5'6''$ and have SAT score $>1300$.
Then, $F(X)=\sum_{j=1}^n\mathbbm{1}[(X_\hgt^j>5'6'') \wedge (X_\sat^j>1300)]$ and $A=\{\hgt,\sat\}$.
In this case we can still use the same Markov quilt as before, but now $A\cap N = \{\sat\}$. 
The mechanism will add Laplace noise scaled with $\Delta_{\{\sat\}} F/(\eps-e_\Theta(X_\gen|X_\inc))$.
\end{example}

It is instructive to contrast the above mechanism to the Markov Quilt Mechanism of \cite{song2017pufferfish}, presented fully in Appendix \ref{app.prelim}. The most important difference is that the mechanism in~\cite{song2017pufferfish} was not designed to guarantee
attribute privacy. It provides
privacy of the values $X^j_i$ but does not protect the distribution
from which $X^j_i$ is generated. This difference in high-level goals leads to three key technical differences. Firstly, the definition of max-influence in~\cite{song2017pufferfish}
measures influence of a \emph{variable value on values of other variables}.
This is insufficient when one wants to protect distributional information,
as $X_i$ may take a range of values while still following a particular
distribution (e.g., hiding the gender of an individual
in a dataset
vs.~hiding the proportion of females to males in this dataset.)
Secondly, while it is natural to consider $L$-Lipschitz
functions to bound sensitivity when one value changes (as is done in \cite{song2017pufferfish}), this
is not applicable to settings where the distribution of data changes, since this may change all values in a column.
As a result, we do not restrict $F$ in this way.
Finally, the mechanisms themselves are different as~\cite{song2017pufferfish} consider answering query $F$ over all attributes
of an individual. As a result, they need to consider sensitivity of a function to all the ``nearby'' attributes.
In contrast, we only consider sensitivity of those ``nearby'' attributes that happen to be in the query (i.e., those in $A$).


\section{The Wasserstein Mechanism for General Attribute Privacy}\label{s.wass}


The Wasserstein Mechanism \cite{song2017pufferfish} (Algorithm \ref{alg.wass}) is a general mechanism for satisfying Pufferfish privacy; Algorithm \ref{alg.wass} is $(\eps,0)$-Pufferfish private for any instantiation of the Pufferfish framework \cite{song2017pufferfish}.
It defines sensitivity of a function $F$ as the maximum Wasserstein distance $W_\infty$ between the distributions of $F(X)$ under two different realizations of secrets $s_i$ and $s_j$ for $(s_i,s_j) \in \pufQ$, and then outputs $F(X)$ plus Laplace noise that scales with this sensitivity. 
The distance metric $W_\infty$ denotes the $\infty$-Wasserstein distance between two probability distributions, formally defined below.



\begin{definition}[$\infty$-Wasserstein distance, $W_\infty$]\label{def.winfty} Let $\mu, \nu$ be two probability distributions on $\mathbb{R}$,\footnote{In general, Wasserstein distance can be defined on any metric space. We will use it only over the real numbers with the Euclidean metric.}
	 and let $\Gamma(\mu,\nu)$ be the set of all joint distributions with marginals $\mu$ and $\nu$. The $\infty$-Wasserstein distance between $\mu$ and $\nu$ is defined as :
\begin{equation}
W_\infty(\mu,\nu)=\inf_{\gamma\in \Gamma(\mu,\nu)}\max_{(x,y)\in support(\gamma)}\abs{x-y}.
\end{equation}
\end{definition}

The $\infty$-Wasserstein distance is closely related to optimal transportation. Each $\gamma \in \Gamma(\mu,\nu)$ can be viewed as a way to shift probability mass between $\mu$ and $\nu$, and the cost is $\max_{(x,y)\in support(\gamma)}\abs{x-y}$. For discrete distributions, the $\infty$-Wasserstein distance is the minimum of the maximum distance that any probability mass moves to transform $\mu$ to $\nu$. 


		\begin{algorithm}[H]
		\centering
			\caption{Wasserstein Mechanism ($X, F, \{S, \pufQ, \Theta \}, \epsilon$) \cite{song2017pufferfish}}
			\begin{algorithmic}
				\State \textbf{Input:} dataset $X$, query $F$, Pufferfish framework $\{S, \pufQ, \Theta \}$, privacy parameter $\epsilon$
				\For {all $(s_i,s_j)\in \pufQ$ and all $\theta\in \Theta$ such that $P(s_i|\theta)\neq 0$, and $P(s_j|\theta)\neq 0$}
				\State Set $\mu_{i,\theta}=P(F(X)|s_i,\theta)$, $\mu_{j,\theta}=P(F(X)|s_j,\theta)$.
				\State Calculate $W_\infty(\mu_{i,\theta},\mu_{j,\theta}).$
				\EndFor
				\State Set $W=\sup_{(s_i,s_j)\in \pufQ, \theta\in \Theta}W_\infty(\mu_{i,\theta},\mu_{j,\theta}).$
				\State Sample $Z\sim\Lap(W/\epsilon)$.
				\State Return $F(X)+Z$
			\end{algorithmic}
			\label{alg.wass}
		\end{algorithm}

Since our framework is an instantiation of Pufferfish privacy, the Wasserstein Mechanism provides a general way to protect either dataset attribute privacy or distributional attribute privacy, when instantiated with the appropriate Pufferfish framework $(S,\pufQ,\Theta)$. This is stated formally in Theorem \ref{thm.wass} and illustrated in Examples \ref{ex.wassdata} and \ref{ex.wassdist} below.


\begin{theorem}\label{thm.wass}
	The Wasserstein Mechanism ($X, F, \{S, \pufQ, \Theta \}, \epsilon$) in Algorithm~\ref{alg.wass} is $(\eps,0)$-dataset attribute private and $(\eps,0)$-distributional attribute private.
\end{theorem}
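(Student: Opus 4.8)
The plan is to treat this theorem as an immediate corollary of the generic guarantee of the Wasserstein Mechanism, together with the observation that both of our attribute privacy notions are, by construction, instances of Pufferfish privacy. Recall from Definition~\ref{def:dbprivacy} that a mechanism satisfies $(\eps,\delta)$-dataset attribute privacy precisely when it is $(\eps,\delta)$-Pufferfish private for the framework $(S,\pufQ,\Theta)$ whose secrets are $s^i_a := \mathbbm{1}[g_i(X_i) \in \mathcal{U}^i_a]$, and similarly Definition~\ref{def:distprivacy} defines distributional attribute privacy as $(\eps,\delta)$-Pufferfish privacy for the framework whose secrets are $s^i_a := \mathbbm{1}[\phi_i \in \Phi^i_a]$. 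Since the Wasserstein Mechanism (Algorithm~\ref{alg.wass}) is $(\eps,0)$-Pufferfish private for \emph{any} instantiation $(S,\pufQ,\Theta)$ of the framework~\cite{song2017pufferfish}, the result will follow simply by instantiating it with each of these two frameworks in turn.

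Concretely, I would first invoke the known guarantee of~\cite{song2017pufferfish}: for an arbitrary Pufferfish framework, setting $W = \sup_{(s_i,s_j)\in\pufQ,\,\theta\in\Theta} W_\infty(\mu_{i,\theta},\mu_{j,\theta})$ with $\mu_{i,\theta} = P(F(X)\mid s_i,\theta)$ and adding Laplace noise $\Lap(W/\eps)$ yields an $(\eps,0)$-Pufferfish private output. I would then specialize this to our two frameworks: substituting the dataset-attribute secrets gives $(\eps,0)$-dataset attribute privacy, and substituting the distributional-attribute secrets gives $(\eps,0)$-distributional attribute privacy. The two cases of the theorem are thus established by the same argument applied to two different choices of $(S,\pufQ,\Theta)$.

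The only points requiring care---and the closest thing to an obstacle---are well-definedness conditions rather than any substantive argument. First, $W_\infty$ in Definition~\ref{def.winfty} is defined only for probability distributions on $\mathbb{R}$, so I must check that the conditionals $\mu_{i,\theta} = P(F(X)\mid s_i,\theta)$ are genuine real-valued distributions; this holds because $F(X)\in\mathbb{R}$ and because the loop in Algorithm~\ref{alg.wass} ranges only over pairs with $P(s_i\mid\theta)\neq 0$ and $P(s_j\mid\theta)\neq 0$, matching the non-degeneracy precondition in Definition~\ref{def:pufferfish}. Second, I should confirm that each discriminative pair $(s^i_a, s^i_b)\in\pufQ$ consists of mutually exclusive events, as the Pufferfish template requires; for both frameworks this is ensured by taking the associated sets $\mathcal{U}^i_a,\mathcal{U}^i_b$ (resp.\ $\Phi^i_a,\Phi^i_b$) to be disjoint. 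Once these bookkeeping checks are in place, no further work is needed: the privacy guarantee is inherited verbatim from the general Pufferfish analysis of~\cite{song2017pufferfish}, and both parts of the theorem are proved simultaneously.
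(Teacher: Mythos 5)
Your proposal is correct and matches the paper's reasoning exactly: the paper offers no separate proof for this theorem, relying precisely on the observation that both attribute privacy notions are by definition Pufferfish instantiations and that the Wasserstein Mechanism of \cite{song2017pufferfish} is $(\eps,0)$-Pufferfish private for any instantiation of the framework. Your additional well-definedness checks (mutual exclusivity of secret pairs, the nonzero-probability conditioning) are sound bookkeeping that the paper leaves implicit.
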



Despite the general purpose nature of the Wasserstein Mechanism for achieving attribute privacy, it is known that computing Wasserstein distance is computationally expensive \cite{atasu2019linear}. Instantiating Algorithm \ref{alg.wass} to satisfy attribute privacy may require computing Wasserstein distance for exponentially many pairs of secrets, one for each subset of values of $g_i(X_i)$ or $\phi_i$. This motivates our study of the Attribute-Private Gaussian Mechanism (Algorithm~\ref{algo:gaussian}) and the Attribute-Private Markov Quilt Mechanism (Algorithm~\ref{alg:quilt}), which are both computationally efficient for practical use.

In some special cases, the Attribute-Private Wasserstein Mechanism may be a feasible option. For example, when computing $\infty$-Wasserstein distance between a pair of 1-dimensional distributions as in Definition \ref{def.winfty}, then $W_{\infty}$ can be computed efficiently \cite{LLL18}. While there exist efficient approximations to $\infty$-Wasserstein distance, any approximation used in the Attribute-Private Wasserstein Mechanism must always be an overestimate of $W_{\infty}$  to ensure that sufficient noise is added to guarantee privacy. Efficiently computable upper bounds on $W_{\infty}$ exist under certain technical conditions on the distributions~\cite{GG20}. In both the case of 1-dimensional distributions and approximations of $\infty$-Wasserstein distance, if the instantiation of Pufferfish privacy requires computing $W_{\infty}$ over a polynomial number of distributions (i.e., polynomially many pairs of secrets and polynomially many possible distributions $\theta$), then the Attribute-Private Wasserstein Mechanism (or an approximate version of the mechanism) can be run efficiently.

\begin{example}[Wasserstein Mechanism for Dataset Attribute Privacy]\label{ex.wassdata}
	Consider two binary attributes $X_1$ and $X_2$, where $X_1$ is the non-sensitive attribute and $X_2$ is the sensitive attribute. Suppose the dataset contains data from four people, and let the underlying distribution and dependence between $X_1$ and $X_2$ is characterized by the following probability distributions:
	\begin{equation}\label{eq.depend}
	P(X_1=1|X_2=1)=p_1  \text{ and } P(X_1=1|X_2=0)=p_2,
	\end{equation}
	Suppose $0.4\le p_1\le 0.6$ and $0.4 \le p_2 \le 0.6$. 
	The analyst wishes to release the summation of $X_1$, $F(X)=\sum_{j=1}^4 X_1^j$, while protecting the summation of $X_2$, so $g(X_2)=\sum_{j=1}^4 X_2^j$. To instantiate our framework, let $s_a^2$ denote the event that $g(X_2)=a$. The support of $g(X_2)$ is $\mathcal{U}=\{0,1,2,3,4\}$. Then the set of secrets is $S=\{s_a^2: a\in \mathcal{U}\}$, and the set of secret pairs is $\pufQ=\{(s_a^2,s_b^2): a, b \in \mathcal{U}, a\neq b\}$. Each $\theta\in \Theta$ is a certain pair of $p_1$ and $p_2$ such that $0.4\le p_1\le 0.6$ and $0.4 \le p_2 \le 0.6$.

Consider the pair of conditional probabilities $\mu_{a,\theta}=P(F(X)=\cdot|s_a^2,\theta)$ and $\mu_{b,\theta}=P(F(X)=\cdot|s_b^2,\theta)$. The worst case Wasserstein distribution between the pair of conditional probability distributions is reached when $p_1=0.4$, $p_2=0.6$, and $a=0$, $b=4$ when the two conditional probabilities differ the most. We list the conditional probabilities for this case in Table \ref{table}.
	\begin{table}[h]
		\centering
		\begin{tabular}{|c|c|c|c|c|c|}
			\hline
			$j$&0&1&2&3&4\\
			\hline
			$P(F(X)=j|g(X_2)=0)$&0.0256&0.1536&0.3456&0.3456&0.1296\\
			\hline
			$P(F(X)=j|g(X_2)=4)$&0.1296&0.3456&0.3456&0.1536&0.0256\\
			\hline
		\end{tabular}
		\caption{Conditional probability distributions under two extreme secrets for dataset attribute privacy}\label{table}
	\end{table}
	
Here, the Wasserstein distance $W_\infty(P(F(X)|g(X_2=0)), P(F(X)|g(X_2=4))= 1$, since the optimal transportation is moving the mass from $1$ to $2$ and $4$ to $3$, and the Wasserstein mechanism will add $\Lap(1/\eps)$ noise to $F(X)$. 

The mechanism with group differential privacy would add $\Lap(4/\eps)$, which gives worse utility. We note that the noise we add depends largely on the underlying distribution class $\Theta$. For example, when $\Theta=\{0.3\le p_1, p_2\le 0.7\}$, the worst case Wasserstein distance is 2 and the mechanism will add $\Lap(2/\eps)$ noise to $F(X)$. When $\Theta=\{0\le p_1, p_2\le 1\}$, the worst case Wasserstein distance is 4, and the Wassertein mechanism will add the same amount of noise as group differential privacy.
	\end{example}
	
	\begin{example}[Wasserstein Mechanism for Distributional Attribute Privacy]\label{ex.wassdist}
	
	Consider the same setting as Example \ref{ex.wassdata}: a dataset of four people with two binary attributes $X_1$ and $X_2$, where $X_1$ is non-sensitive and $X_2$ is sensitive.  Let the underlying distribution and dependence between realized attributes $X_1$ and $X_2$ still be governed by \eqref{eq.depend}, and for simplicity fix $p_1=0.4$ and $p_2=0.6$. In the setting of distributional attribute privacy, we are interested in the conditional marginal distribution parameters of $X_i$ given the \emph{parameter} for $X_j$, rather than the realization of $X_j$. We denote the Bernoulli distribution parameter for $X_1$ and $X_2$ as $\phi_1$ and $\phi_2$, respectively. According to ~\eqref{eq.depend}, we have $\phi_1=0.4\phi_2+0.6(1-\phi_2)=0.6-0.2\phi_2$.
	
	The analyst wishes to release the summation of $X_1$, $F(X)=\sum_{j=1}^4 X_1^j$, while protecting the distribution parameter $\phi_2$ for $X_2$. To instantiate our framework, we let $s_a^2$ denote the event that $\phi_2=a$, and we suppose the support of $\phi_2$ is $\Phi^2=[0.2,0.8]$. The set of secrets is $S=\{s_a^2: a\in \Phi^2\}$, and the set of secret pairs is $\pufQ=\{(s_a^2,s_b^2): a, b \in \Phi^2, a\neq b\}$.  Each $\theta\in\Theta$ is a certain pair of $\phi_1$ and $\phi_2$ such that $\Phi^2=[0.2,0.8]$ and $\phi_1=0.6-0.2\phi_2$. 
	
	In this case, the support for $\phi_1$ is $[0.44,0.56]$. Consider the pair of conditional probabilities $\mu_{a,\theta}=P(F(X)=\cdot|s_a^2,\theta)$ and $\mu_{b,\theta}=P(F(X)=\cdot|s_b^2,\theta)$. The worst case Wasserstein distribution between the pair of conditional probability distributions is reached when $a=0.8$ and $b=0.2$ when the two conditional probabilities differ the most. We list the conditional probabilities for this case in Table \ref{table2}.
	
	\begin{table}[h]
		\centering
		\begin{tabular}{|c|c|c|c|c|c|}
			\hline
			$j$&0&1&2&3&4\\
			\hline
			$P(F(X)=j|\phi_2=0.8)$&0.0983&0.3091&0.3643&0.1908&0.0375\\
			\hline
			$P(F(X)=j|\phi_2=0.2)$&0.0375&0.1908&0.3643&0.3091&0.0983\\
			\hline
		\end{tabular}
		\caption{Conditional probability distributions under two extreme secrets for distributional attribute privacy}\label{table2}
	\end{table}
	
	The Wasserstein distance $W_\infty(P(F(X)|\phi_2=0.8), P(F(X)|\phi_2=0.2)= 1$, since the optimal transportation is moving the mass from $1$ to $2$ and $4$ to $3$, and the Wasserstein mechanism will add $\Lap(1/\eps)$ noise to $F(X)$. 
\end{example}

\bibliography{private,olyarefs}
\bibliographystyle{alpha}

\newpage

\appendix


\section{Additional Preliminaries}\label{app.prelim}

In this appendix we review the Markov Quilt Mechanism of \cite{song2017pufferfish} for satisfying Pufferfish privacy.  This algorithm assumes that the entries in the input database $Y$ form a Bayesian Network (Definition \ref{def.bayesnet}).  These entires could either be: (1) the multiple attributes of a single record when the database contained only one record, or (2) the attribute values across multiple records for a single-fixed attribute when the database contained multiple attributes.  Hence, the original Markov Quilt Mechanism could not accommodate correlations across multiple attributes in multiple records, as we study in this work.


The following definition measures influence of a \emph{variable value on values of other variables}.  One can compare this to Definition~\ref{def:maxInfD} used in the Attribute-Private Markov Quilt Mechanism, which instead measures max-influence of a parameter of the probability distribution of a variable on the distribution of parameters of other variables, as is needed in the attribute privacy setting.

\begin{definition}[Variable-Max-Influence~\cite{song2017pufferfish}]
	The maximum influence of a variable $Y_i$ on a set of variables $Y_A$ under $\Theta$ is:
	$$e^v_\Theta(Y_A|Y_i)=\sup_{\theta\in\Theta}\max_{a,b\in\mathcal{Y}}\max_{y_A\in\mathcal{Y}^{\text{card}(Y_A)}}\log \frac{P(Y_A=y_A|Y_i=a,\theta)}{P(Y_A=y_A|Y_i=b,\theta)}.$$
	\label{def:maxInfV}
\end{definition}


Recall the definition of a Markov Quilt (Definition \ref{def.markovquilt}), which is used in this mechanism.

		\begin{algorithm}[H]
		\centering
			\caption{Markov Quilt Mechanism  ($Y, F, \{S, \pufQ, \Theta \}, \epsilon$)  \cite{song2017pufferfish} }
			\begin{algorithmic}
				\State \textbf{Input:} database $Y$, $L$-Lipschitz query $F$, Pufferfish framework $\{S, \pufQ, \Theta \}$, privacy parameter $\epsilon$.
				\For {each $Y_i$}
				\State Let $G_i := \{ (Y_Q, Y_N, Y_R) : Y_Q \text{ is a Markov Quilt of } Y_i \}$
				\For {all $Y_Q$ (with $Y_N, Y_R$ ) in $G_i$}
				\If {$e^v_{\Theta}(Y_Q|Y_i)<\eps$}
				\State Set $b(Y_Q)=\frac{|Y_N|}{\eps-e^v_\Theta(Y_Q|Y_i)}$.
				\Else
				\State Set $b(Y_Q)=\infty$.
				\EndIf
				\EndFor
				\State Set $b_i=\min_{Y_Q\in G_i} b(Y_Q)$.
				\EndFor
				\State Set $b_{\max}=\max_i b_i$.
				\State Sample $Z\sim\Lap(L \cdot b_{\max})$.
				\State Return $F(Y)+Z$
			\end{algorithmic}\label{algo:mq}
		\end{algorithm}


The Markov Quilt Mechanism  \cite{song2017pufferfish} given in Algorithm \ref{algo:mq} guarantees $(\eps,0)$-Pufferfish Privacy.



%

\end{document}